\newtheorem{lemma}{Lemma}
\newtheorem{proposition}{Proposition}
\newtheorem{theorem}{Theorem}
\newtheorem{definition}{Definition}
\newcommand{\E}{\mathbb{E}}
\begin{document}
\bibliographystyle{IEEEtran}
\title{On the optimal stacking of noisy observations}
\author{
  \IEEEauthorblockN{\O yvind Ryan,~\IEEEmembership{Member,~IEEE}\\}
  \thanks{This work was supported by Alcatel-Lucent within the Alcatel-Lucent Chair on flexible radio at SUPELEC}
  \thanks{\O yvind~Ryan is with the Centre of Mathematics for Applications, University of Oslo, P.O. Box 1053 Blindern, NO-0316 Oslo, NORWAY, 
          and with SUPELEC, Gif-sur-Yvette, France, oyvindry@ifi.uio.no}
}

\markboth{Submitted to IEEE Transactions on Signal Processing}{Shell \MakeLowercase{\textit{et al.}}: Bare Demo of IEEEtran.cls for Journals}

\maketitle
\begin{abstract}
Observations where additive noise is present can for many models be grouped into a compound observation matrix, adhering to the same type of model. 
There are many ways the observations can be stacked, for instance vertically, horizontally, or quadratically. 
An estimator for the spectrum of the underlying model can be formulated for each stacking scenario in the case of Gaussian noise. 
We compare these spectrum estimators for the different stacking scenarios, 
and show that all kinds of stacking actually decreases the variance when 
compared to just taking an average of the observations. 
We show that, regardless of the number of observations, the variance of the estimator is smallest when the 
compound observation matrix is made as square as possible. 
When the number of observations grow, however, it is shown that the difference between the estimators is marginal: 
Two stacking scenarios where the number of columns and rows grow to infinity are shown to have the same variance asymptotically, even 
if the asymptotic matrix aspect ratios differ. 
Only the cases of vertical and horizontal stackings display different behaviour, giving a higher variance asymptotically.
Models where not all kinds of stackings are possible are also discussed.
\end{abstract}

\begin{keywords}
Gaussian matrices, Random Matrices, free convolution, deconvolution, spectrum estimation.
\end{keywords}

\section{Introduction}
Random matrices find applications in many fields of research, 
such as digital communication~\cite{paper:telatar99}, mathematical finance~\cite{book:bouchaud} and nuclear physics~\cite{paper:guhr}. 
Free probability theory~\cite{vo2,paper:vomult,vo6,vo7,book:hiaipetz} has strong connections with random matrix theory, and can be used
for high dimensional statistical inference by addressing the following questions: 

{\em Given ${\bf A}$, ${\bf B}$ two $n\times n$ independent square Hermitian (or symmetric)
random matrices:\\
1) Can one derive the  eigenvalue distribution of ${\bf A}$ from the ones of ${\bf A} + {\bf B}$ and ${\bf B}$?\\
2) Can one derive the eigenvalue distribution of ${\bf A}$ from the ones
of ${\bf AB}$ and ${\bf B}$?}

More generally, such questions can be asked starting with any functional of the involved random matrices. 
If 1) or 2) can be answered for given random matrices ${\bf A}$ and ${\bf B}$, 
the corresponding operation for finding the eigenvalue distribution is called deconvolution. 
Deconvolution can be easier to perform in the large $n$-limit, and the literature contains result 
in this respect both for Vandermonde matrices~\cite{ryandebbah:vandermonde1,ryandebbah:vandermonde2}, 
and Gaussian matrices~\cite{eurecom:freedeconvinftheory}.
For Gaussian matrices, there also exist results in the finite regime~\cite{ryandebbah:finitedim}. 
The methods generally used to perform deconvolution are the Stieltjes transform method~\cite{paper:doziersilverstein1}, 
and the moments method~\cite{vo2,Florent}. 
In this contribution we will focus on the latter, which is based on the relations between the moments of the matrices involved. 
The $p$-th moment of an $n\times n$ random matrix ${\bf A}$ is defined as
\begin{equation} \label{momdef} 
t_{\bf A}^{n,p}=\E\left[ \mathrm{tr}({\bf A}^p) \right]=\int \lambda^pd\rho(\lambda)
\end{equation}
where $\E[\cdot]$ is the expectation, $\mathrm{tr}$ the normalized trace, 
and $d\rho=\E\left(\frac{1}{n} \sum_{i=1}^n \delta(\lambda-\lambda_i)\right)$ the associated empirical mean measure, 
with $\lambda_i$ the eigenvalues of ${\bf A}$. 
Both the Stieltjes transform and the moments can be used to retrieve eigenvalues, 
and can therefore be used for spectrum estimation.
For many types of random matrices, $t_{{\bf A}}^{n,p}$ converges almost surely when $n\to \infty$ to an analytical expression $t_{{\bf A}}^{p}$, 
depending only on some specific parameters, such as the distribution of the entries of ${\bf A}$.
This enables to reduce the dimensionality of the problem, and simplifies the computation of convolution of measures. 

Expressions for deconvolution turn out to be quite simple if asymptotic freeness~\cite{book:hiaipetz} is assumed, 
which is the case for large Gaussian matrices.
However, freeness does not hold in the finite case. In this respect,~\cite{ryandebbah:finitedim} modifies the moment-based free probability framework 
so that it applies for Gaussian matrices in the finite regime. 
The goal of this contribution is to address a particular question on how this framework best can be adapted for spectrum estimation purposes.
The observations in some random matrix models allow for stacking into a larger compound observation matrix.
When this is possible, some questions arise which do not seem to have been covered in the literature:
\begin{enumerate}
  \item Can the compound observation matrix be put into the same finite dimensional inference framework?
  \item Is one stacking of the observations better than another, for the purpose of spectrum estimation?
\end{enumerate}
A popular way of combining observations is the so-called sample covariance matrix, 
which essentially results from stacking observations of a random vector horizontally into a compound matrix. 
Results in this paper will actually challenge this construction for certain random matrix models, 
showing that it is not always the best way of combining observations.
Another facet of stacking is that it can make asymptotic results more applicable, and eliminate the need for results in the finite-dimensional regime. 
This can be very nice, since asymptotic results can be simpler to obtain, and have a nicer form. 
However, we will also present a model where stacking can only be partially applied.
The framework~\cite{ryandebbah:finitedim} has also been applied in a situation where it is not clear how to apply stacking in any way~\cite{romain:sensing}. 
We will give a partial answer to the above questions in this paper, 
in the sense that we characterize the stacking of observations which is optimal in terms of the variance of the corresponding spectrum estimators, 
and we characterize what we gain in comparison with methods where observations are not stacked.

The paper is organized as follows. 
Section~\ref{section:essentials} provides background essentials on random matrix theory 
needed to state the main results. In particular, we define different ways of stacking observations, and
random matrix models we will analyze which allow for such stacking. 
Section~\ref{section:theorems} states the main result, which concerns optimal stackings with the framework~\cite{ryandebbah:finitedim}. 
The main result can be stated without the actual definition of the spectrum estimator in question, 
which is therefore delayed till Section~\ref{mainestimator}. 
There we also prove formally that this estimator is unbiased, and state an expression for it, 
useful for implementing it and for proving the main result. 
In section~\ref{simulations} we present some useful simulations verifying the results.

\section{Random matrix Background Essentials} \label{section:essentials}
In the following, upper boldface symbols will be used for matrices,
and lower symbols will represent scalar values.
$(.)^T$ will denote the transpose operator, $(.)^\star$
conjugation, and $(.)^H=\left((.)^T\right)^\star$ hermitian
transpose. ${\bf I}_n$ will represent the $n\times n$ identity matrix.
We let $\mathrm{Tr}$ be the (non-normalized) trace for square matrices, defined by
\[
  \mathrm{Tr}({\bf A}) = \sum_{i=1}^n a_{ii},
\]
where $a_{ii}$ are the diagonal elements of the $n\times n$ matrix ${\bf A}$.
We also let $\mathrm{tr}$ be the normalized trace, defined by $\mathrm{tr}({\bf A}) = \frac{1}{n}\mathrm{Tr}({\bf A})$.
When ${\bf A}$ is non-random, we define its moments by $A_p=\mathrm{tr}({\bf A}^p)$, and more generally
\[ A_{p_1,\ldots,p_k} = \mathrm{tr}\left( {\bf A}^{p_1}\right) \cdots \mathrm{tr}\left({\bf A}^{p_k}\right).\]
If ${\bf A}$ is instead random, its (expected) moments are given by (\ref{momdef}), and more generally 
\[ \E\left[ \mathrm{tr}\left( {\bf A}^{p_1}\right) \cdots \mathrm{tr}\left({\bf A}^{p_k}\right) \right]. \]

${\bf X}$ will denote a standard complex Gaussian matrix, 
meaning that it has i.i.d. complex Gaussian entries with zero mean and unit variance
In particular, the real and imaginary parts of the entries are independent, each with variance $\frac{1}{2}$.

From $L=L_1L_2$ observations of an $n\times N$ random matrix ${\bf Y}$, we can form the $(nL_1)\times(NL_2)$ compound observation matrix, denoted ${\bf Y}_{L_1,L_2}$, 
by stacking the observations into a $L_1\times L_2$ block matrix in a given order.
Similarly, if ${\bf D}$ is non-random, we will denote by ${\bf D}_{L_1,L_2}$ the compound matrix formed in the same way from ${\bf D}$.
We will be concerned with the following question:

{\em 
Given a random matrix model on the form ${\bf Y}=f({\bf D},{\bf X}_1,{\bf X}_2,...)$, 
where ${\bf D}$ is non-random, 
and the ${\bf X}_i$ are Gaussian and independent. 
How can we best infer on the spectrum of ${\bf D}$ from independent observations ${\bf Y}_1,...,{\bf Y}_L$ of ${\bf Y}$?
}

We will restrict such inference to models where our methods apply directly to the compound observation matrix.
This turns out to be easier when one of the ${\bf X}_i$ is an additive component in $f$. 
To examplify this, we will first state the two models we will analyze.
For both, spectrum estimation methods from~\cite{ryandebbah:finitedim} will be applied.

\subsection{The additive model}
In applications such as MIMO channel modeling, the additive model
\begin{equation} \label{additivemodel}
  {\bf Y}=f({\bf D},{\bf X})={\bf D} + {\bf X},
\end{equation}
applies, where ${\bf D}$ is non-random and ${\bf X}$ is Gaussian, both $n\times N$. 
Given $L=L_1L_2$ observations of this model, the compound matrices ${\bf Y}_{L_1,L_2},{\bf D}_{L_1,L_2},{\bf X}_{L_1,L_2}$ satisfy
\[
  {\bf Y}_{L_1,L_2}=f({\bf D}_{L_1,L_2},{\bf X}_{L_1,L_2}).
\]
For (\ref{additivemodel}), we have moment-based methods to infer on the spectrum of $\frac{1}{N}{\bf D}{\bf D}^H$ 
from that of $\frac{1}{N}{\bf Y}{\bf Y}^H$~\cite{ryandebbah:finitedim,eurecom:channelcapacity}. 
Since ${\bf X}_{L_1,L_2}$ also is Gaussian, the same methods can be used to infer on the spectrum of $\frac{1}{NL_2}{\bf D}_{L_1,L_2}{\bf D}_{L_1,L_2}^H$ 
from the compound observation matrix. But since 
\begin{equation} \label{scalingeq}
  \mathrm{tr}\left(\left(\frac{1}{NL_2}{\bf D}_{L_1,L_2}{\bf D}_{L_1,L_2}^H\right)^p\right)
  = L_1^{p-1}\mathrm{tr}\left(\left(\frac{1}{N}{\bf D}{\bf D}^H\right)^p\right),
\end{equation}
spectrum estimation methods applied to the compound observation matrix actually helps us to infer on the spectrum of $\frac{1}{N}{\bf D}{\bf D}^H$.
We will state this estimator later on. To ease notation, we will let 
$D_p=\mathrm{tr}\left(\left( \frac{1}{N}{\bf D}{\bf D}^H\right)^p\right)$ in the following. 
We will see that different stackings $L_1,L_2$ give rise to different spectrum estimators, and compare their variances.

\subsection{A more involved model}
The random matrix model 
\begin{equation} \label{secondmodel}
  {\bf Y}=f({\bf D},{\bf X}_1,{\bf X}_2)={\bf D}{\bf X}_1 + {\bf X}_2
\end{equation}
can be found in multi-user MIMO applications. 
${\bf D}$ is non-random ($n\times m$), and ${\bf X}_1$ ($m\times N$) and ${\bf X}_2$ ($n\times N$) are independent and Gaussian. 
This model can also be subject to stacking, although the first component ${\bf D}{\bf X}_1$ in the sum now is random. 
To see this, assume that we have $L$ independent observations ${\bf Y}_k={\bf D}{\bf X}_{1,k} + {\bf X}_{2,k}$, $1\leq k\leq L$. 
Writing
\begin{eqnarray}
  {\bf Y}_{1...L}   &=& \left[{\bf Y}_{1},{\bf Y}_{2},...,{\bf Y}_{L}\right] \nonumber \\
  {\bf X}_{1,1...L} &=& \left[{\bf X}_{1,1},{\bf X}_{1,2},...,{\bf X}_{1,L}\right] \nonumber \\
  {\bf X}_{2,1...L} &=& \left[{\bf X}_{2,1},{\bf X}_{2,2},...,{\bf X}_{2,L}\right] \label{stackingeq}
\end{eqnarray}
(where ${\bf Y}_{1...L}$ is $n\times (NL)$, ${\bf X}_{1,1...L}$ is $m\times (NL)$, ${\bf X}_{2,1...L}$ is $n\times (NL)$), 
we can write ${\bf Y}_{1...L}={\bf D}{\bf X}_{1,1...L}+{\bf X}_{2,1...L}$. 
This has the same form as the original model, so that the same type of spectrum estimation methods can also be used for the compound observation matrix.
The underlying spectrum estimation method is now a two-stage process, where we in the first stage infer on the expected moments 
\begin{equation} \label{firststage}
  \E\left[ \mathrm{tr}\left(\left({\bf D}\left(\frac{1}{NL}{\bf X}_{1,1...L}{\bf X}_{1,1...L}^H\right){\bf D}^H\right)^p\right) \right],
\end{equation}
and use these in a second stage to infer on the moments of ${\bf D}{\bf D}^H$. 
This will be demonstrated further in Section~\ref{simulations}.

In light of the two models we have mentioned, we will differ between the following:
\begin{definition}
We will call a stacking of $L=L_1L_2$ observations into a $L_1\times L_2$ block matrix 
\begin{itemize}
  \item a horizontal stacking if $L_1=1$, 
  \item a vertical stacking if $L_2=1$, or
  \item a rectangular stacking if the limit $c=\lim \frac{L_1}{L_2}$ exists as
    the number of observations grow to infinity, with $0<c<\infty$.
\end{itemize}
These three types of stackings are also denoted by $H$, $V$, and $R$, respectively.
\end{definition}

For (\ref{secondmodel}), a horizontal stacking of observations lended itself, and seems to be the only natural way of stacking.
This is in contrast to (\ref{additivemodel}) where any rectangular stacking applied. 
For other models, neither rectangular, nor horizontal stacking way work~\cite{romain:sensing}.
We will not attempt to classify for which models the different types of stackings are possible, 
but rather to compare the different types of stackings whenever they apply.

In the literature, horizontal stacking has been applied to (\ref{additivemodel})~\cite{eurecom:channelcapacity}. 
The framework in~\cite{ryandebbah:finitedim} can be used to 
define unbiased estimators, useful for inference both for (\ref{additivemodel}) and (\ref{secondmodel}). 
For (\ref{additivemodel}), we will denote by $\widehat{D_p}$ such an estimator for $D_p$. 
Applying the framework to the compound observation matrices as sketched above, we get other unbiased estimators for $D_p$, 
denoted $\widehat{D_{p,L_1,L_2}}$. 
The main result, stated in the next section, is concerned with finding the "optimal" stacking,
in the sense of finding which $L_1,L_2$ give an estimator $\widehat{D_{p,L_1,L_2}}$ with lowest possible variance.
We will let $v_{p,\cdot,L}$ denote the variance of $\widehat{D_{p,L_1,L_2}}$, 
with $L=L_1L_2$ the number of observations, and $\cdot$ the stacking ($H$, $V$, or $R$). 
We will in addition let $A$ denote taking the average of $L$ applications of $\widehat{D_p}$, 
and denote the variance of the corresponding estimator by $v_{p,A,L}$. 
All stackings will be compared with averaging also. 
The main result is stated before the formulations of the estimators themselves (see Section~\ref{mainestimator}), 
since these expressions are rather combinatorial in nature, and thus need some more preliminaries before they can be stated.

\section{Statement of the main result} \label{section:theorems}
The following result essentially says that any rectangular stacking is asymptotically the best one when our framework is used, 
and that horizontal and vertical stacking have a slightly higher variance. 
Averaging of observations gives a variance higher than this again.
An even stronger conclusion can be drawn in the case of any given finite number of observations, 
where we prove that the stacking with "the most square" compound observation matrix gives the lowest variance.
This challenges the classical way of stacking observations horizontally when forming the sample covariance matrix. 

When $P$ is a polynomial in several variables, 
denote by the degree of $P$, or $\deg(P)$, the highest sum of exponents in any term therein. 
\begin{theorem} \label{stackable}
All estimators $\widehat{D_p},\widehat{D_{p,L_1,L_2}}$ we consider are unbiased estimators for $D_p=\mathrm{tr}\left(\left(\frac{1}{N}{\bf D}{\bf D}^H\right)^p\right)$ 
from observations of (\ref{additivemodel}), 
and their variances $v_{p,\cdot,L}$ are all $O(L^{-1})$. Moreover, 
\[ \lim_{L\rightarrow\infty} Lv_{1,\cdot,L} = \frac{2}{nN} D_1 + \frac{1}{nN},\]
where $\cdot$ can be $H,V,R$, or $A$. For $p\geq 2$ we have that
\begin{eqnarray*}
  \lim_{L\rightarrow\infty} Lv_{p,R,L} &=& \frac{2p^2}{nN} D_{2p-1} \\
  \lim_{L\rightarrow\infty} Lv_{p,V,L} &=& \frac{2p^2}{nN} D_{2p-1} + \frac{p^2}{N^2} D_{2p-2} \\
  \lim_{L\rightarrow\infty} Lv_{p,H,L} &=& \frac{2p^2}{nN} D_{2p-1} + \frac{p^2}{nN}  D_{2p-2} \\
  \lim_{L\rightarrow\infty} Lv_{p,A,L} &=& \frac{2p^2}{nN} D_{2p-1} + \left( \frac{p^2}{N^2} + \frac{p^2}{nN} \right) D_{2p-2} \\
                                       & & + Q(D_{2p-3},...,D_1),
\end{eqnarray*}
where $Q$ is a polynomial in $D_{2p-3},D_{2p-4},\ldots,D_1$ of degree $2p-2$, with only positive coefficients.
In particular, all rectangular stackings asymptotically have the same variance, and
\begin{eqnarray*}
  \lim_{L\rightarrow\infty} Lv_{p,R,L} &\leq& \lim_{L\rightarrow\infty} Lv_{p,V,L} \leq \lim_{L\rightarrow\infty} Lv_{p,A,L} \\
  \lim_{L\rightarrow\infty} Lv_{p,R,L} &\leq& \lim_{L\rightarrow\infty} Lv_{p,H,L} \leq \lim_{L\rightarrow\infty} Lv_{p,A,L}
\end{eqnarray*}
(since $Q(D_{2p-3},...,D_1)\geq 0$, since all $D_p\geq 0$). 
Also, the variance decreases with $L$ for a fixed stacking aspect ratio, and, 
for a given $L$ and any rectangular stackings $R_1,R_2$ into $L=L_1^{(1)}\times L_2^{(1)}$ and $L=L_1^{(2)}\times L_2^{(2)}$ observations, respectively. 
$v_{p,R_1,L}<v_{p,R_2,L}$ if and only if the $(nL_1^{(1)})\times(NL_2^{(1)})$ compound observation matrix 
is more square than the $(nL_1^{(2)})\times(NL_2^{(2)})$ compound observation matrix.
Also, $v_{p,\cdot,L}<v_{p,A,L}$ for any stacking.
\end{theorem}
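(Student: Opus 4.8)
The plan is to reduce every case to a single finite-dimensional variance formula for the additive model, and then read off the four regimes by elementary power counting. Unbiasedness of all the $\widehat{D_p}$ and $\widehat{D_{p,L_1,L_2}}$ is exactly what the construction in Section~\ref{mainestimator} guarantees, so I take that as given. The key observation is that the compound matrix ${\bf Y}_{L_1,L_2}={\bf D}_{L_1,L_2}+{\bf X}_{L_1,L_2}$ is \emph{itself} an instance of (\ref{additivemodel}), now with dimensions $\tilde n=nL_1$ and $\tilde N=NL_2$ and with deterministic moments $\tilde D_k=\mathrm{tr}\bigl((\tfrac{1}{\tilde N}{\bf D}_{L_1,L_2}{\bf D}_{L_1,L_2}^H)^k\bigr)=L_1^{k-1}D_k$ by (\ref{scalingeq}). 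Since $\widehat{D_{p,L_1,L_2}}$ is obtained from the estimator of the compound $p$-th moment by dividing out the factor $L_1^{p-1}$ of (\ref{scalingeq}), we get $v_{p,L_1,L_2}=L_1^{-2(p-1)}\,\sigma_p(\tilde n,\tilde N)$, where $\sigma_p(\tilde n,\tilde N)$ is the variance of the single-observation moment estimator for the additive model in dimension $\tilde n\times\tilde N$ (a function also of the $\tilde D_k$). Everything therefore hinges on this one master quantity.

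The core step is to compute $\sigma_p(\tilde n,\tilde N)$ from the explicit combinatorial expression for the estimator in Section~\ref{mainestimator}. Writing the estimator as a polynomial in the observed moments $\mathrm{tr}\bigl((\tfrac{1}{\tilde N}{\bf Y}{\bf Y}^H)^k\bigr)$, its variance is a sum of covariances of such moments, each of which I expand by Gaussian integration by parts (the Wick calculus for ${\bf X}$). Grouping the resulting contractions by how many independent summation indices survive gives an expansion in inverse powers of $\tilde n$ and $\tilde N$ of the shape
\[
  \sigma_p=\frac{2p^2}{\tilde n\tilde N}\tilde D_{2p-1}
          +\frac{p^2}{\tilde N^2}\tilde D_{2p-2}
          +\frac{p^2}{\tilde n\tilde N}\tilde D_{2p-2}
          +R_p(\tilde n,\tilde N),
\]
with the remainder $R_p$ collecting strictly lower-order terms in $\tilde D_{2p-3},\ldots,\tilde D_1$ carrying additional inverse powers of $\tilde n,\tilde N$ and only nonnegative coefficients. \textbf{This power counting is the main obstacle:} one must show that the two displayed $\tilde D_{2p-2}$ terms carry exactly the dimensional weights $\tilde N^{-2}$ and $(\tilde n\tilde N)^{-1}$, and crucially that no $\tilde n^{-2}\tilde D_{2p-2}$ term appears, since it is precisely this asymmetry that later separates $V$ from $H$.

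Granting this, the four asymptotics follow by substituting $\tilde n=nL_1$, $\tilde N=NL_2$, $\tilde D_k=L_1^{k-1}D_k$, multiplying by $L_1^{-2(p-1)}$ and then by $L=L_1L_2$. The leading term becomes $\frac{2p^2}{nN}D_{2p-1}$, independent of the stacking. The $\tilde N^{-2}$ term contributes $\frac{p^2 D_{2p-2}}{N^2 L_2}$ to $Lv_{p,L_1,L_2}$, which vanishes when $L_2\to\infty$ (hence for $R$ and $H$) but persists as $\frac{p^2}{N^2}D_{2p-2}$ when $L_2=1$ (for $V$); symmetrically the $(\tilde n\tilde N)^{-1}\tilde D_{2p-2}$ term contributes $\frac{p^2 D_{2p-2}}{nN L_1}$, vanishing when $L_1\to\infty$ ($R$ and $V$) but surviving as $\frac{p^2}{nN}D_{2p-2}$ when $L_1=1$ ($H$). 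The genuine remainder $R_p$ contributes $o(1)$ to $Lv_{p,\cdot,L}$ in every $L\to\infty$ limit, so it drops out for $R$, $V$ and $H$ alike; since only $L_1,L_2\to\infty$ enters, all rectangular stackings share the same limit. For averaging, $v_{p,A,L}=\tfrac1L\,\sigma_p(n,N)$ exactly (variance of a mean of $L$ i.i.d.\ copies of $\widehat{D_p}$, i.e.\ the degenerate stacking $L_1=L_2=1$), so $Lv_{p,A,L}=\sigma_p(n,N)$ retains the full remainder, which is identified with $Q(D_{2p-3},\ldots,D_1)$.

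Finally, the inequalities and finite-$L$ statements are read off the same master formula. The orderings $R\le V,H\le A$ hold termwise because every extra term has a nonnegative coefficient and every $D_k\ge0$, whence also $Q\ge0$. For fixed $L$, the aspect-ratio dependence of $Lv_{p,L_1,L_2}$ enters through $\frac{p^2D_{2p-2}}{N}\bigl(\frac{1}{nL_1}+\frac{1}{NL_2}\bigr)$ (plus strictly smaller terms of the same monotone shape); minimizing $\frac{1}{nL_1}+\frac{1}{NL_2}$ under $L_1L_2=L$ gives, by convexity and AM--GM, a unique optimum at $nL_1=NL_2$, i.e.\ at the most square compound matrix, with the value strictly increasing as $|nL_1-NL_2|$ grows. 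This yields both the equivalence ``$v_{p,R_1,L}<v_{p,R_2,L}$ iff the first matrix is more square'' and, by the rescaling $(L_1,L_2)\mapsto(tL_1,tL_2)$, the decrease of the variance with $L$ at fixed aspect ratio. Since averaging is exactly the $L_1=L_2=1$ case divided by $L$, its bracket is the largest possible, giving $v_{p,\cdot,L}<v_{p,A,L}$ for every genuine stacking. The subtle point to be checked here is that the \emph{exact} finite-dimensional $\sigma_p$, not merely its leading terms, depends on $(L_1,L_2)$ only through such unimodal combinations, so that the monotonicity claims are strict rather than merely asymptotic.
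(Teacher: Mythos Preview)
Your strategy is the paper's strategy: reduce to the single-observation variance $\sigma_p$ at the rescaled dimensions $(\tilde n,\tilde N)=(nL_1,NL_2)$ with rescaled moments $\tilde D_k=L_1^{k-1}D_k$, and read off the four regimes by power counting. Your arithmetic in passing from the master formula to the limits for $R,V,H,A$ is correct. However, the two points you yourself flag as ``the main obstacle'' and ``the subtle point to be checked'' are precisely the entire technical content of the proof, and you have supplied neither.

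For the first: you need the exact leading structure of $\sigma_p$, including the asymmetry that $\tilde D_{2p-2}$ carries a $\tilde N^{-2}$ and an $(\tilde n\tilde N)^{-1}$ weight but \emph{no} $\tilde n^{-2}$ weight. The paper obtains this from the explicit combinatorial variance formula (\ref{actualexpression}), a sum over partial permutations $\pi\in\text{SPR}_{2p}$, and then classifies the $\pi$ according to whether the associated ``graph of random edges'' (i) has no deterministic edges, (ii) is a tree bordering deterministic edges, or (iii) has a cycle. Only case (ii) with $|\rho_1|=1$ or $|\rho_1|=2$ (successive identifications) survives at order $L^{-1}$, and the class sizes are $2p^2$ and $p^2$, giving your coefficients. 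The asymmetry you need traces back to the asymmetric prefactor $n^{|\sigma(\pi)|-2}/N^{|\rho_1|}$ in (\ref{actualexpression}), which for $|\sigma|=1,|\rho_1|=2$ equals $n^{-1}N^{-2}$ rather than $n^{-2}N^{-1}$. A Wick-calculus computation would reproduce this if carried through, but the tree/non-tree classification is what actually makes the power counting rigorous.

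For the second: the finite-$L$ claim that the most square compound matrix is strictly optimal does \emph{not} follow from the leading terms you display. The paper's mechanism is to group the $\pi$ into equivalence classes under simultaneous cyclic shifts of the two circles; because odd shifts interchange even and odd vertex labels, the class sum $Q_{\bar\pi}(n,N)$ is \emph{symmetric} in its two arguments. Hence in (\ref{lookatthis}) every term's $(L_1,L_2)$-dependence beyond the factor $L^{-|\rho_1|}$ is a positive multiple of $(nL_1)^k(NL_2)^l+(nL_1)^l(NL_2)^k$, which for fixed $L=L_1L_2$ equals $(nNL)^{(k+l)/2}\bigl(c^{(k-l)/2}+c^{(l-k)/2}\bigr)$ with $c=nL_1/(NL_2)$, minimized at $c=1$. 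Your AM--GM on $\frac1{nL_1}+\frac1{NL_2}$ handles only one such term; the cyclic-shift symmetry is what shows \emph{every} term in the exact variance is unimodal in the aspect ratio, which is exactly the ``subtle point'' you left open, and is also what underlies the strict comparison with averaging.
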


The proof of Theorem~\ref{stackable} can be found in Appendix~\ref{mainapp}. 
The polynomial $Q$ above can be computed for the lower order moments, to compare the actual difference between averaging and stacking. 
Although we do not state the expression for $Q$, we have computed its values in an implementation in Section~\ref{simulations}, 
in order to show the actual variances for the different stacking scenarios. 

Since Theorem~\ref{stackable} is a statement on the leading order term of the variances of moments of certain random matrices, it is in the same genre as 
the recently developed theory of second order freeness~\cite{secondorderfreeness1,secondorderfreeness2,secondorderfreeness3}.
Our matrix setting is, however, slightly different than the ones considered in these papers. 

We will not state expressions for the variance for the model (\ref{secondmodel}), since this is more involved. 
Instead, we will in Section~\ref{simulations} verify in a simulation that stacking seems to be desirable here as well.
In the next section we will formulate the unbiased estimators for our models, which the main result refers to.

\section{Formulation of the estimator} \label{mainestimator}
To state our estimators, we need the following concept, taken from~\cite{ryandebbah:finitedim}:
\begin{definition}
  Let $p$ be a positive integer. By a partial permutation we mean a one-to-one mapping $\pi$
  between two subsets $\rho_1,\rho_2$ (which may be empty) of $\{ 1,\ldots,p\}$. We denote by $|\rho_1|$ the number of elements in $\rho_1$, 
  and by $\text{SP}_p$ the set of partial permutations of $p$ elements.
\end{definition}

$\pi$ is uniquely defined from the sets $\rho_1,\rho_2$, and a one-to one mapping $q:\rho_1\rightarrow\rho_2$. 
We will therefore in the following denote a partial permutation by $\pi=\pi(\rho_1,\rho_2,q)$. 
We will need the following result, taken from~\cite{ryandebbah:finitedim}, 
where the general statement is for the case when ${\bf D}$ is random, independent from ${\bf X}$:
\begin{proposition} \label{recursivesum}
Let ${\bf X}$ be an $n\times N$ standard, complex, Gaussian matrix and ${\bf D}$ be an $n\times N$ non-random matrix. Set
\begin{align*}
  D_{p_1,\ldots,p_k} &= \mathrm{tr}\left(\left(\frac{1}{N}{\bf D}{\bf D}^H\right)^{p_1} \right) \mathrm{tr}\left(\left(\frac{1}{N}{\bf D}{\bf D}^H\right)^{p_2} \right) \cdots \\
                     &  \qquad \times \mathrm{tr}\left(\left(\frac{1}{N}{\bf D}{\bf D}^H\right)^{p_k} \right) \\
  M_{p_1,\ldots,p_k} &= \E\left[ \mathrm{tr}\left(\left( \frac{1}{N}({\bf D}+{\bf X})({\bf D}+{\bf X})^H \right)^{p_1}\right) \right.\\
                     & \qquad \times \mathrm{tr}\left(\left( \frac{1}{N}({\bf D}+{\bf X})({\bf D}+{\bf X})^H \right)^{p_2}\right) \cdots \\
                     & \qquad \times \left. \mathrm{tr}\left(\left( \frac{1}{N}({\bf D}+{\bf X})({\bf D}+{\bf X})^H \right)^{p_k}\right)\right],
\end{align*}
We have that
\begin{align}\label{genformulasum}
  M_{p_1,...,p_k} &= \sum_{{\pi\in \text{SP}_p}\atop{\pi=\pi(\rho_1,\rho_2,q)}} \frac{n^{|\sigma(\pi)|-k}}{N^{|\rho_1|}} \nonumber \\
                  &\qquad\qquad \times N^{k(\rho(\pi))-kd(\rho(\pi))} n^{l(\rho(\pi))-ld(\rho(\pi))}  \nonumber \\
                  &\qquad\qquad \times D_{l_1,\ldots,l_r},
\end{align}
where $\sigma(\pi)$, $\rho(\pi)$, $k(\rho(\pi))$, $kd(\rho(\pi))$, $l(\rho(\pi))$, $ld(\rho(\pi))$, 
are explained below, and where $l_1,\ldots,l_r$ are the cardinalities of the blocks of $\sigma(\pi)$ divided by $2$. 
\end{proposition}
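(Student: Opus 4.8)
The plan is to compute $M_{p_1,\ldots,p_k}$ directly by expanding every normalized trace as a sum over index tuples, substituting ${\bf W}={\bf D}+{\bf X}$, and then averaging over the Gaussian entries with the Wick (Isserlis) formula. Concretely, writing ${\bf W}={\bf D}+{\bf X}$, the $(a,c)$ entry of ${\bf W}{\bf W}^H$ is $\sum_b W_{ab}\overline{W_{cb}}$, so the $j$-th factor becomes
\[
  \mathrm{tr}\left(\left(\tfrac{1}{N}{\bf W}{\bf W}^H\right)^{p_j}\right)
  = \frac{1}{nN^{p_j}}\sum_{a_1,\ldots,a_{p_j}}\sum_{b_1,\ldots,b_{p_j}} W_{a_1 b_1}\overline{W_{a_2 b_1}}W_{a_2 b_2}\cdots W_{a_{p_j} b_{p_j}}\overline{W_{a_1 b_{p_j}}},
\]
a cyclic word of length $2p_j$ alternating between entries of ${\bf W}$ and of ${\bf W}^H$, with row-indices $a$ running over $\{1,\ldots,n\}$ and column-indices $b$ over $\{1,\ldots,N\}$. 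Taking the product over $j=1,\ldots,k$ produces $k$ such cycles, so the total deterministic prefactor is $n^{-k}N^{-(p_1+\cdots+p_k)}$ with $p=p_1+\cdots+p_k$.

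Next I substitute $W_{ab}=D_{ab}+X_{ab}$ (and its conjugate) into each of the $2p$ factors and expand, obtaining a sum over all ways of declaring each position a ${\bf D}$-position or an ${\bf X}$-position. Only the ${\bf X}$-factors are random, so I pull the expectation inside and apply Wick's theorem for standard complex Gaussians: using $\E[X_{ab}X_{cd}]=0$ and $\E[X_{ab}\overline{X_{cd}}]=\delta_{ac}\delta_{bd}$, the expectation is nonzero only when the ${\bf X}$-positions can be paired into (entry, conjugate-entry) couples, and each pairing of $X_{ab}$ with $\overline{X_{cd}}$ contributes $\delta_{ac}\delta_{bd}$, forcing the identification of one row- and one column-index. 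This is exactly the data recorded by a partial permutation $\pi=\pi(\rho_1,\rho_2,q)$: the subsets $\rho_1,\rho_2$ mark which positions carry paired Gaussians, and $q$ encodes which ${\bf X}$ is matched with which $\overline{{\bf X}}$; the $|\rho_1|$ couples account for the factor $N^{-|\rho_1|}$ once the forced column-index identifications are carried out.

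The heart of the argument is then the bookkeeping of the surviving index sums. The forced identities from the Kronecker deltas, together with the cyclic linking already present in each trace, partition the row-indices and the column-indices into equivalence classes; summing each free class over $\{1,\ldots,n\}$ or $\{1,\ldots,N\}$ contributes a power of $n$ or $N$ equal to the number of classes. Meanwhile the leftover ${\bf D}$-positions, once their indices are glued according to these classes, reassemble into closed alternating cycles, each of length $2l_i$, and each such cycle is precisely a factor $\mathrm{tr}((\tfrac1N{\bf D}{\bf D}^H)^{l_i})$; this yields the term $D_{l_1,\ldots,l_r}$, where the $l_i$ are the cardinalities of the blocks of $\sigma(\pi)$ divided by two. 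Collecting all prefactors and comparing the exponents against the initial $n^{-k}N^{-p}$ gives the claimed powers $n^{|\sigma(\pi)|-k}$, $N^{k(\rho(\pi))-kd(\rho(\pi))}$, and $n^{l(\rho(\pi))-ld(\rho(\pi))}$, once $\sigma(\pi)$, $\rho(\pi)$ and the counts $k(\cdot),kd(\cdot),l(\cdot),ld(\cdot)$ are defined to record exactly these numbers of surviving row- and column-classes.

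I expect the main obstacle to be precisely this last accounting: establishing a bijection between (choice of ${\bf D}/{\bf X}$-pattern, Wick pairing) and partial permutations, and then proving that the numbers of free row- and column-classes equal the stated combinatorial functions of $\pi$. This is a genus-type count adapted to rectangular matrices carrying deterministic decorations, and matching the closed form requires organizing the index identifications as a graph (equivalently, a gluing of polygons) and counting its connected components carefully, rather than any single hard analytic estimate. Since the statement is quoted from~\cite{ryandebbah:finitedim}, I would either reproduce that diagrammatic computation in full or invoke it directly after checking that the non-random specialization of ${\bf D}$ only removes the outer expectation over ${\bf D}$ and leaves the $D_{l_1,\ldots,l_r}$ factors as the deterministic traces appearing above.
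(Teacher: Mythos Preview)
The paper does not prove Proposition~\ref{recursivesum}; it quotes the result from~\cite{ryandebbah:finitedim} and only supplies the geometric dictionary for the symbols $\sigma(\pi)$, $\rho(\pi)$, $k,kd,l,ld$. Your proposal is therefore not in conflict with anything in the paper, and your closing remark---invoke the reference after checking that specializing ${\bf D}$ to be non-random merely drops the outer expectation---is exactly the paper's stance.

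As for the sketch itself, the Wick/Isserlis expansion you outline is the standard route and is correct in its architecture: expand each normalized trace cyclically, split each $W$-slot into a ${\bf D}$-slot or an ${\bf X}$-slot, pair the Gaussian slots via Wick, and read off the surviving index sums as powers of $n$ and $N$. Two small points worth tightening if you were to write it out in full. First, the correspondence is not quite ``(choice of ${\bf D}/{\bf X}$-pattern, Wick pairing) $\leftrightarrow$ partial permutation'': the even slots (the ${\bf W}$'s) and the odd slots (the $\overline{{\bf W}}$'s) play asymmetric roles, and $\rho_1,\rho_2$ record separately which even and which odd positions are Gaussian, with $q:\rho_1\to\rho_2$ the pairing; configurations with unpaired ${\bf X}$'s vanish under the expectation and do not appear. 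Second, the exponent bookkeeping is delicate because a free index class contributes a bare $n$ or $N$ only when none of its vertices borders a deterministic edge (otherwise that index is tied up inside a ${\bf D}$-trace), which is precisely why the paper distinguishes $k$ from $kd$ and $l$ from $ld$; your paragraph names this correctly but the verification that the exponents come out as $|\sigma(\pi)|-k$, $k(\rho(\pi))-kd(\rho(\pi))$, $l(\rho(\pi))-ld(\rho(\pi))$ is where all the work sits, and is exactly what~\cite{ryandebbah:finitedim} carries out.
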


The following geometric interpretations explain the concepts in Proposition~\ref{recursivesum}, 
and is a summary from~\cite{ryandebbah:finitedim}:
\begin{itemize}
  \item We draw $k$ disconnected circles with $2p_1,2p_2,...,2p_k$ edges, respectively, and number the edges clockwise from $1$ to $2p_1+\cdots+2p_k$. 
    The set $\rho_1$ is visualized as a subset of the even edges ($2,4,...,2p$) under the mapping $i\rightarrow 2i$,
    $\rho_2$ is visualized as a subset of the odd edges ($1,3,...,2p-1$) under the mapping $i\rightarrow 2i-1$. 
  \item $q(i)=j$ means that the corresponding even and odd edges $2i$ and $2j-1$ are identified, 
    and with opposite orientation. 
  \item The vertices on the circles are also labeled clockwise, so that edge $i$ borders to vertices $i$ and $i+1$.
    When edges are identified as above, we also get an identification between the vertices bordering to the edges.
    This gives rise to an equivalence relation on the vertices. $\rho(\pi)\in{\cal P}(2p_1+\cdots+2p_k)$ 
    is the corresponding partition of the equivalence classes of vertices, where ${\cal P}(n)$ denotes the partition of $n$ elements. 
  \item It turns out that a block of $\rho$ either consists of odd numbers only (odd vertices), or of even numbers only (even vertices).
    $k(\rho(\pi))$ is defined as the number of blocks consisting of even numbers only, 
    $l(\rho(\pi))$ as the number of blocks consisting of odd numbers only.
  \item Edges from $\rho_1$ and $\rho_2$ are called random edges, other edges are called deterministic edges. 
    $kd(\rho(\pi))$ is the number of even equivalence classes of vertices bordering to a deterministic edge, 
    $ld(\rho(\pi))$ is defined similarly for odd equivalence classes of vertices.
  \item $\sigma=\sigma(\pi)$ is the partition where the blocks are the connected components of deterministic edges after identification of edges.
  \item By the graph of random edges we will mean the graph constructed when we, after the identification of edges, join vertices 
    which are connected with a path of deterministic edges, and afterwards remove the set of deterministic edges,
\end{itemize}
The quantities $k(\rho(\pi))-kd(\rho(\pi))$ and $l(\rho(\pi))-ld(\rho(\pi))$ in (\ref{genformulasum}) thus describe the number of even and odd vertices, respectively,  
which do not border to deterministic edges in the graph after the identification of edges. 
Note that when $\rho_1=\rho_2=\{1,...,p\}$, $\sigma(\pi)$ is a partition of zero elements. 
In this case we define $D_{l_1,\ldots,l_r}=1$. 

We now have all terminology in place in order to state a useful expression for our estimators:
\begin{lemma} \label{lemmaestimatorform}
Let ${\bf Y}={\bf D}+{\bf X}$ be an observation of the model (\ref{additivemodel}), and let $Y_p$ be the moments 
\begin{equation} \label{obsexpr}
Y_p=\mathrm{tr}\left(\left( \frac{1}{N}{\bf Y}{\bf Y}^H \right)^p\right). 
\end{equation}
\begin{align}
  \widehat{D_{p_1,...,p_k}} &= \sum_{{\pi\in \text{SP}_p}\atop{\pi=\pi(\rho_1,\rho_2,q)}} (-1)^{|\rho_1|} \frac{n^{|\sigma(\pi)|-k}}{N^{|\rho_1|}} \nonumber \\
                            &\qquad\qquad \times    N^{k(\rho(\pi))-kd(\rho(\pi))} n^{l(\rho(\pi))-ld(\rho(\pi))} \nonumber \\
                            &\qquad\qquad \times Y_{l_1,\ldots,l_r} \label{combform}
\end{align}
is an unbiased estimator for $D_{p_1,...,p_k}$, i.e. $\E\left(\widehat{D_{p_1,...,p_k}}\right)=D_{p_1,...,p_k}$ for all $p$. 
In particular, $\widehat{D_{p}}$ is an unbiased estimator for $D_p$. 

Similarly, given $L=L_1L_2$ observations of (\ref{additivemodel}), form the compound observation matrix ${\bf Y}_{L_1,L_2}$
and let instead $Y_p$ be the moments 
\begin{equation} \label{obsexpr2}
Y_p=\mathrm{tr}\left(\left( \frac{1}{NL_2}{\bf Y}_{L_1,L_2}{\bf Y}_{L_1,L_2}^H \right)^p\right).
\end{equation}
\begin{eqnarray}
  \lefteqn{\widehat{D_{p_1,...,p_k,L_1,L_2}}} \nonumber \\
  &=& L_1^{k-p_1-\cdots-p_k} \sum_{{\pi\in \text{SP}_p}\atop{\pi=\pi(\rho_1,\rho_2,q)}} (-1)^{|\rho_1|} \frac{(nL_1)^{|\sigma(\pi)|-k}}{(NL_2)^{|\rho_1|}} \nonumber \\
  & &\qquad\qquad \times    (L_2N)^{k(\rho(\pi))-kd(\rho(\pi))} (L_1n)^{l(\rho(\pi))-ld(\rho(\pi))} \nonumber \\
  & &\qquad\qquad \times Y_{l_1,\ldots,l_r} \label{combform2}
\end{eqnarray}
is also an unbiased estimator for $D_{p_1,...,p_k}$ for any $L_1,L_2$. 
In particular $\widehat{D_{p,L_1,L_2}}$ is an unbiased estimator for $D_p$.
\end{lemma}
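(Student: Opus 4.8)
The plan is to deduce unbiasedness directly from the moment formula of Proposition~\ref{recursivesum}. Write the estimator in (\ref{combform}) as a linear combination $\widehat{D_{p_1,\ldots,p_k}} = \sum_\pi s(\pi)\, Y_{l_1,\ldots,l_r}$, where $s(\pi)$ collects the sign $(-1)^{|\rho_1|}$ together with the relevant powers of $n$ and $N$. Taking expectations and using linearity, the only randomness sits in the observed moments $Y_{l_1,\ldots,l_r}$; since $Y_{l_1,\ldots,l_r}=\prod_j\mathrm{tr}((\tfrac1N{\bf Y}{\bf Y}^H)^{l_j})$ with ${\bf Y}={\bf D}+{\bf X}$, Proposition~\ref{recursivesum} gives $\E[Y_{l_1,\ldots,l_r}]=M_{l_1,\ldots,l_r}$. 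Hence $\E[\widehat{D_{p_1,\ldots,p_k}}]=\sum_\pi s(\pi)\,M_{l_1,\ldots,l_r}$, and expanding each $M_{l_1,\ldots,l_r}$ by (\ref{genformulasum}) converts this into a double sum over pairs of partial permutations whose entries are genuine deterministic moments $D_{\cdots}$. The whole problem therefore reduces to the purely combinatorial identity that this signed double sum collapses to the single term $D_{p_1,\ldots,p_k}$; equivalently, writing $\mathbf{M}=T\mathbf{D}$ for the linear map encoded by (\ref{genformulasum}) and $\widehat{\mathbf{D}}=S\mathbf{Y}$ for (\ref{combform}), unbiasedness is exactly the statement $ST=\mathrm{Id}$.

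To see why $ST=\mathrm{Id}$ should hold, I would exploit the probabilistic meaning of the two summations. In the expansion of $M_{p_1,\ldots,p_k}$ each term of (\ref{genformulasum}) records a choice of which edges of the circles are ``random'' (the elements of $\rho_1,\rho_2$, paired by $q$ and integrated out by the Gaussian Wick calculus) and which are ``deterministic'' (contributing the factors $D_{l_1,\ldots,l_r}$). The factor $(-1)^{|\rho_1|}$ in the estimator is precisely an inclusion--exclusion weight over the set of random edges. The key step is to fix the final deterministic configuration (equivalently the output multi-index together with the partition $\sigma$) and to show that, summed against this sign over all compatible choices of random edges arising from composing the two partial permutations, every configuration using at least one random edge cancels, leaving only the all-deterministic configuration, which reproduces $D_{p_1,\ldots,p_k}$ with coefficient $1$. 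I expect this cancellation to be the main obstacle: it requires setting up a sign-reversing involution (or an equivalent M\"obius-type argument) on the structure generated by the random edges, and checking that the accompanying powers of $n$ and $N$ in $s(\pi)$ and in (\ref{genformulasum}) match up exactly, so that paired terms cancel not only in sign but in magnitude. The bookkeeping of the exponents $|\sigma|-k$, $k(\rho)-kd(\rho)$ and $l(\rho)-ld(\rho)$ under composition is where the delicate part lies.

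For the compound estimator (\ref{combform2}) I would avoid redoing any combinatorics and instead reduce to the case already treated. Because ${\bf X}_{L_1,L_2}$ is again a standard complex Gaussian matrix, the compound matrix satisfies ${\bf Y}_{L_1,L_2}={\bf D}_{L_1,L_2}+{\bf X}_{L_1,L_2}$ and is an instance of the additive model (\ref{additivemodel}) with $n$ replaced by $nL_1$ and $N$ by $NL_2$. The first part of the lemma, applied verbatim with these substitutions, produces an unbiased estimator for the deterministic moments $\widetilde{D}_{p_1,\ldots,p_k}$ of $\tfrac{1}{NL_2}{\bf D}_{L_1,L_2}{\bf D}_{L_1,L_2}^H$; indeed the coefficients in (\ref{combform2}) are exactly those of (\ref{combform}) after the replacements $n\mapsto nL_1$, $N\mapsto NL_2$. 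Finally, the scaling identity (\ref{scalingeq}) gives $\widetilde{D}_{p}=L_1^{p-1}D_p$, so that $\widetilde{D}_{p_1,\ldots,p_k}=\prod_i L_1^{p_i-1}D_{p_i}=L_1^{p_1+\cdots+p_k-k}\,D_{p_1,\ldots,p_k}$. Multiplying the unbiased estimator of $\widetilde{D}_{p_1,\ldots,p_k}$ by the prefactor $L_1^{k-p_1-\cdots-p_k}$ appearing in (\ref{combform2}) therefore yields an unbiased estimator of $D_{p_1,\ldots,p_k}$, and specializing to $k=1$ gives the claim for $\widehat{D_{p,L_1,L_2}}$.
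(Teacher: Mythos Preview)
Your treatment of the compound estimator (\ref{combform2}) is exactly the paper's: apply the single-observation result to the $nL_1\times NL_2$ model and then use (\ref{scalingeq}) to convert the moments of $\frac{1}{NL_2}{\bf D}_{L_1,L_2}{\bf D}_{L_1,L_2}^H$ into those of $\frac{1}{N}{\bf D}{\bf D}^H$. Nothing to add there.

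For (\ref{combform}) itself your route is genuinely different from the paper's. You propose to take expectations of the explicit formula, expand each $\E[Y_{l_1,\ldots,l_r}]$ by Proposition~\ref{recursivesum}, and then prove directly that the resulting signed double sum over pairs of partial permutations collapses to $D_{p_1,\ldots,p_k}$; in linear-algebra language, you aim at $ST=\mathrm{Id}$ in one shot. The paper instead separates the two issues. It first \emph{defines} an estimator recursively by rearranging (\ref{genformulasum}) (isolating the empty partial permutation and moving the remaining terms to the other side, replacing each $D_{l_1,\ldots,l_r}$ by a lower-order estimator). Unbiasedness then follows by a clean induction on $p_1+\cdots+p_k$, with no combinatorics at all. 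Only afterwards does the paper identify this recursive estimator with the closed form (\ref{combform}), by unrolling the recursion into a sum over \emph{nested} partial permutations $\pi_1,\ldots,\pi_l$ and showing, via a counting argument on the sets $\Pi_{\pi,i,j}\cap\Pi_{l,k+2}$ versus $\Pi_{\pi,i,j}\cap\Pi_{l+1,k+2}$, that the alternating sign $(-1)^l$ forces the sum to localize on sequences where each $\pi_i$ consists of a single identification, which is precisely (\ref{combform}) with sign $(-1)^{|\rho_1|}$.

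Your approach is viable in principle, but as written it stops at the hard step: you correctly flag the sign-reversing involution and the matching of the $n,N$ exponents as ``the main obstacle'', yet you do not construct the involution or verify the exponent bookkeeping. That is exactly the content that needs to be supplied, and it is not automatic; the paper's nested-permutation device is one concrete way to organize it. The advantage of the paper's decomposition is that unbiasedness comes for free from induction, and the cancellation is isolated as a purely syntactic identification of two formulas, which makes the exponent tracking more transparent. The advantage of your direct $ST=\mathrm{Id}$ approach would be conceptual economy (one identity instead of two steps), but to make it a proof you would still need to produce the explicit pairing and check that the weights agree term by term.
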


When we talk about averaging of observations ${\bf Y}_1,...,{\bf Y}_L$, (i.e. the $A$ in $v_{p,A,L}$), 
we mean computing $\frac{1}{L}\sum_{i=1}^L \widehat{D_p}({\bf Y_i})$ using (\ref{combform}). 
It is clear that this is also an unbiased estimator for $D_p$, with variance $v_{p,A,L}$ being $\frac{1}{L}$ times that of $\widehat{D_p}$, 
since observations are assumed independent.

Note that there is a constant term in $\widehat{D_{p_1,...,p_k}}$, coming from $\pi$ where $\rho_1=\rho_2=\{1,...,p\}$. 
The proof of Lemma~\ref{lemmaestimatorform} can be found in Appendix~\ref{appendixlemmaestimatorform}, 
and builds on Proposition~\ref{recursivesum}. 
The appendix concentrates on the proof of (\ref{combform}), since the proof of (\ref{combform2}) is immediate: 
the term trailing $L_1^{k-p_1-\cdots-p_k}$ in (\ref{combform2})
is an unbiased estimator for the moments $F_p=\mathrm{tr}\left(\left(\frac{1}{NL_2}{\bf D}_{L_1,L_2}{\bf D}_{L_1,L_2}^H\right)^p\right)$, once (\ref{combform}) is proved, 
so that the entire right hand side of (\ref{combform2}) is an unbiased estimator for 
\begin {eqnarray*}
  \lefteqn{L_1^{k-p_1-\cdots-p_k} F_{p_1,\ldots,p_k}}\\ 
  &=& L_1^{k-p_1-\cdots-p_k} F_{p_1} \cdots F_{p_k}\\
  &=&(L_1^{1-p_1}F_{p_1})\cdots (L_1^{1-p_k}F_{p_k})\\
  &=& D_{p_1}\cdots D_{p_k},
\end{eqnarray*}
where we have used (\ref{scalingeq}).  

There can also be a known noise variance $\sigma$ present, so that 
(\ref{additivemodel}) takes the form ${\bf Y}={\bf D}+\sigma{\bf X}$. (\ref{combform}) can in this case be modified to
\begin{align}
  \widehat{D_{p_1,...,p_k}} &= \sum_{{\pi\in \text{SP}_p}\atop{\pi=\pi(\rho_1,\rho_2,q)}} (-1)^{|\rho_1|} \sigma^{2|\rho_1|}\frac{n^{|\sigma(\pi)|-k}}{N^{|\rho_1|}} \nonumber \\
                            &\qquad\qquad \times    N^{k(\rho(\pi))-kd(\rho(\pi))} n^{l(\rho(\pi))-ld(\rho(\pi))} \nonumber \\
                            &\qquad\qquad \times Y_{l_1,\ldots,l_r}. \label{combform3}
\end{align}
The proof of this is omitted, since it follows the same lines.

For the additive model (\ref{additivemodel}), we will only be interested in the expressions for the estimators $\widehat{D_p}$. 
If we have a model where ${\bf D}$ is instead a random matrix ${\bf R}$, like (\ref{secondmodel}), 
one can formulate unbiased estimators $\widehat{R_{p_1,...,p_k}}$ 
in the same way following~\cite{ryandebbah:finitedim}, unbiased now meaning $\E\left(\widehat{R_{p_1,...,p_k}}\right)=R_{p_1,...,p_k}$, where 
\begin{align}
  R_{p_1,\ldots,p_k} &= \E\left[ \mathrm{tr}\left( \left(\frac{1}{N}{\bf R}{\bf R}^H\right)^{p_1} \right) \right. 
                                 \mathrm{tr}\left( \left(\frac{1}{N}{\bf R}{\bf R}^H\right)^{p_2} \right) \cdots \nonumber \\
                     & \qquad \left.\times \mathrm{tr}\left( \left(\frac{1}{N}{\bf R}{\bf R}^H\right)^{p_k} \right) \right]. \label{rdef}
\end{align}                  

The estimators (\ref{combform}) were also used in~\cite{eurecom:channelcapacity}, without mentioning the form (\ref{combform}). 
This form is useful in that it makes it clear that the expressions (\ref{genformulasum}) and (\ref{combform}) are quite similar, 
enabling reuse of the implementation developed in~\cite{ryandebbah:finitedim} for computing (\ref{genformulasum}).
Secondly, (\ref{combform}) can be used for obtaining an expression for the variances
of $\widehat{D_p}$, staying within the same framework of partitions. 
We will state this expression and prove it in Appendix~\ref{appendixlemmaestimatorform}. 
In Appendix~\ref{mainapp}, Theorem~\ref{stackable} will be proved by analyzing this expression for the different stackings.

\section{Simulations} \label{simulations}
For~\cite{ryandebbah:finitedim} an implementation of the concepts used in Proposition~\ref{recursivesum} was made. 
In the following simulations, the computation of (\ref{actualexpression}) and (\ref{obsexpr2}) has used this implementation, 
with the restriction to the particular class of partitions therein
\footnote{A guide to the Matlab source code running the following simulations can be found in~\cite{ryandebbah:optstackingtools}.}.

Figure~\ref{fig:estimator} shows results for the third moment estimator (\ref{combform2}) applied to a diagonal matrix ${\bf D}$, 
with diagonal entries assumed to be $2,1,1,0.5$ (i.e. $n=N=4$).
The estimator where applied to quadratic stackings of $L=1,4,9,16,...,$ all the way up to $L=900$ observations. 
\begin{figure}
  \begin{center}
    \epsfig{figure=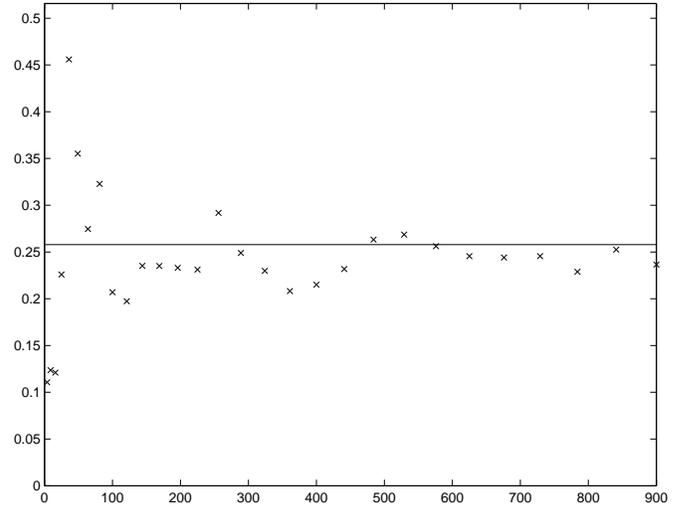,width=0.99\columnwidth}
  \end{center}
  \caption{The estimator (\ref{obsexpr2}) with quadratic stacking applied with different number of observations. 
           ${\bf D}$ is a $4\times 4$ matrix.
           The actual third moment of $\frac{1}{N}{\bf D}{\bf D}^H$ is also shown.} \label{fig:estimator}
\end{figure}
Although Theorem~\ref{stackable} says that the quadratic stacking is optimal, the difference between the different estimators may be hard to detect in practice, 
since differences may be small. 
Figure~\ref{fig:variancensadditive} gives a comparison for the actual variances for different number of observations 
and different stacking aspect ratios, verifying Theorem~\ref{stackable}. 
The theoretical limits for rectangular and horizontal stacking and averaging are also shown.
We have used the same $4\times 4$ matrix, and computed the expression (\ref{actualexpression}) 
to obtain the variance for the estimator for the third moment. 
As predicted by Theorem~\ref{stackable}, the variance tends towards the theoretical lower bounds for rectangular and horizontal stacking 
when the number of observations grow. 
For $L=50$ observations, to verify the results, we have also plotted the empirical variances
\[
  \frac{1}{K-1}\sum_{i=1}^K (x_i-\bar{x})^2,
\]
where $\{ x_i\}_{i=1}^K$ are $K$ outputs from the estimator (i.e. a number of $KL$ observations is needed, 
since each run of the estimator requires $L$ observations), 
and $\bar{x}=\frac{1}{K}\sum_{i=1}^K x_i$ is the mean.
We have set $K=1000$, and indicated the empirical variances for $L_1=1,2,5,10$, which correspond to $c=0.02,2/25,0.5,2$. 
\begin{figure}
  \subfigure[$L=5$]{\epsfig{figure=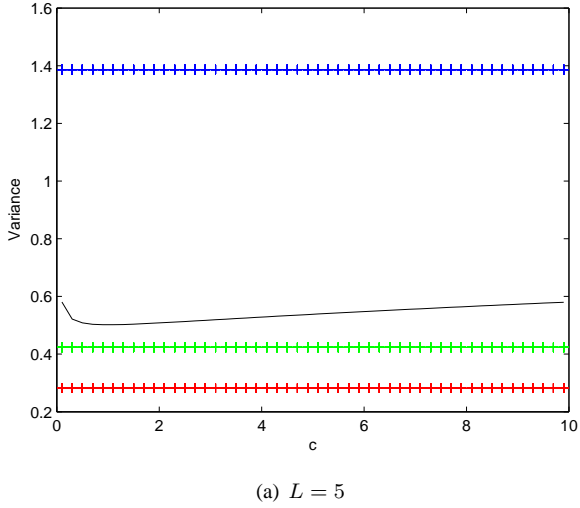,width=0.99\columnwidth}}
  \subfigure[$L=50$. Empirical variances are also shown for $c=0.02,2/25,0.5,2$.]{\epsfig{figure=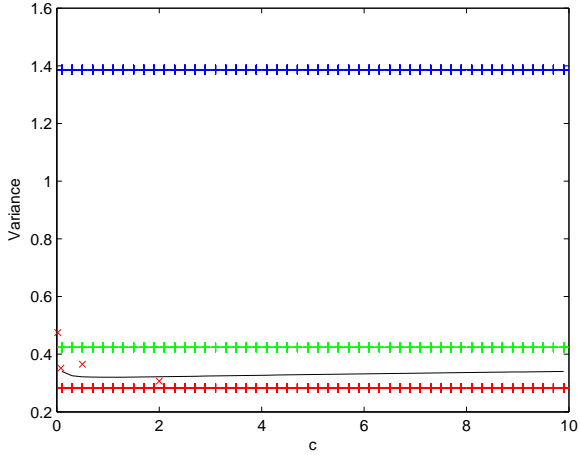,width=0.99\columnwidth}}
  \caption{Figures displaying $Lv_{3,\cdot,L}$ for the different estimators for the model (\ref{additivemodel}), for different number of observations $L$. 
    A diagonal matrix ${\bf D}$ with entries $2,1,1,0.5$ on the diagonal has been chosen.
    The three rectangular lines are the theoretical limits $\lim_{L\rightarrow\infty} Lv_{3,\cdot,L}$ 
    for rectangular stacking, horizontal stacking, and averaging, as predicted by Theorem~\ref{stackable}, in increasing order. 
    It is seen that aspect ratio near $1$ gives lowest variance, and that the variances decreases towards the theoretical limit predicted by 
    Theorem~\ref{stackable} when $L$ increases.} \label{fig:variancensadditive}
\end{figure}

\subsection{The model (\ref{secondmodel})}
We will compare horizontal stacking for (\ref{secondmodel}) with that of averaging.
As previously mentioned, this is a two-stage estimation, 
where we in the first stage get an unbiased estimate of the expected moments (\ref{firststage}) in the case of horizontal stacking, 
and an unbiased estimate of the expected moments $\E\left(\left({\bf D}\left(\frac{1}{N}{\bf X}_1{\bf X}_1^H\right){\bf D}^H\right)^p\right)$ 
in the case of averaging. In any case, denote the involved matrix by ${\bf S}$, define
\begin{equation}
  S_{p_1,...,p_k}=\E\left({\bf S}^{p_1}\right) \E\left({\bf S}^{p_2}\right) \cdots \E\left({\bf S}^{p_k}\right),
\end{equation}
and denote by $\widehat{S_{p_1,...,p_k}}$ the corresponding unbiased estimator. 
In the second stage, Theorem 3 of~\cite{ryandebbah:finitedim} gives unbiased estimators $\widehat{D_{p_1,...,p_k}}$ 
for the moments of ${\bf D}{\bf D}^H$ from the $\widehat{S_{p_1,...,p_k}}$, 
by stating an invertible matrix $A$ so that 
\[
  [\widehat{D_{p_1,...,p_k}}] = A^{-1}[\widehat{S_{p_1,...,p_k}}],
\]
where $[\widehat{S_{p_1,...,p_k}}]$ are all expected moments (for all possible $p_1,...,p_k$), grouped into a column vector in a given order.
In Figure~\ref{fig:model2est}, the unbiased estimators for horizontal stacking and averaging of observations have been compared for (\ref{secondmodel}). 
\begin{figure}
  \begin{center}
    \epsfig{figure=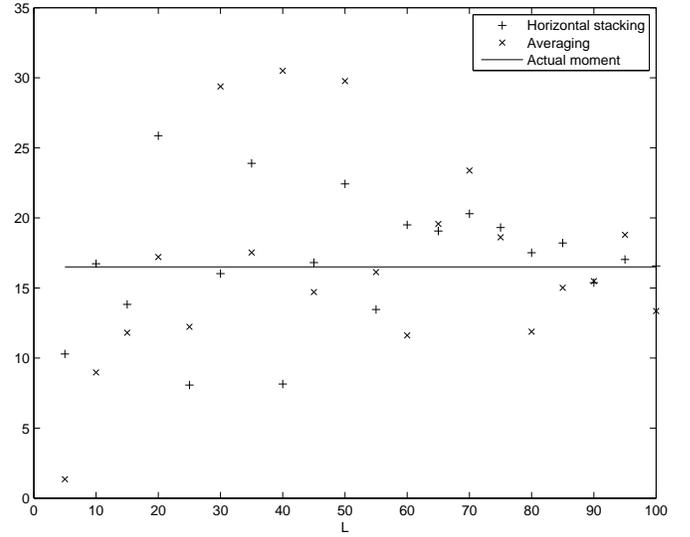,width=0.99\columnwidth}
  \end{center}
  \caption{The unbiased estimator for (\ref{secondmodel}) for the third moment of ${\bf D}{\bf D}^H$, 
           with ${\bf D}$ the $4\times 4$ diagonal matrix with $2,1,1,0.5$ on the diagonal. 
           The estimator is applied for up to $100$ observations, for both cases of horizontal stacking and averaging of observations.} \label{fig:model2est}
\end{figure}
The simulation is run for the same $4\times 4$ matrix, and it is seen that there is a high variance in the estimator for such a small matrix, 
even when the number of observations grows to $L=100$. To get an idea on whether horizontal stacking gives something here also in terms of variance, 
we need to run the estimators many times, and compare their empirical variances. 
This has been done in Figure~\ref{fig:model2emp}, where the empirical variance is computed from $50$ runs of the estimator. 
The figure suggests that, indeed, the empirical variance is lower in the case of stacking. 
We will, however, not prove this mathematically. 
\begin{figure}
  \begin{center}
    \epsfig{figure=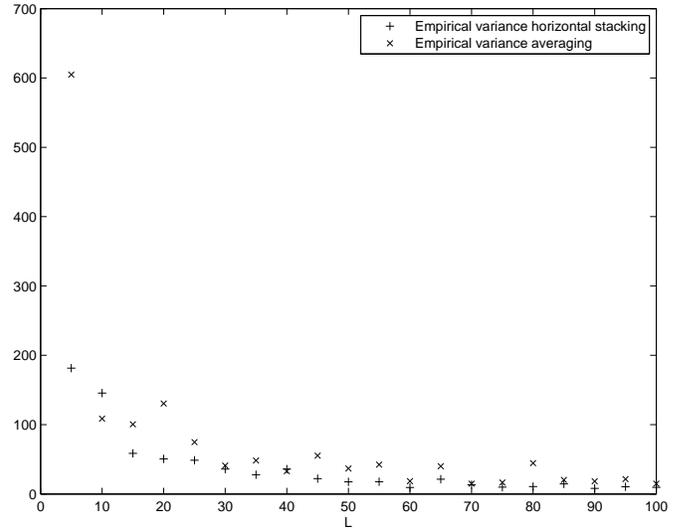,width=0.99\columnwidth}
  \end{center}
  \caption{The empirical variances of the estimators for (\ref{secondmodel}), which were shown in Figure~\ref{fig:model2est}. 
           For each $L$ the estimator was run $50$ times on  a set of $L$ observations, and the empirical variance was computed from this.
           It seems that the empirical variance is lower for the case of horizontal stacking, suggesting that results on stacking valid for 
           (\ref{additivemodel}) may have validity for more general models also.} \label{fig:model2emp}
\end{figure}

\section{Conclusion and further work}
We have analyzed an unbiased spectrum estimator for a model with additive Gaussian noise, 
and shown that the way the observations are stacked can play a role. 
More specifically, it is desirable to make the compound observation matrix as square as possible, 
as this will give rise to estimators with lowest possible variance. 
Asymptotically (i.e. when the number of observations grow to infinity), 
the variance of the estimators are the same, with only vertical and horizontal stacking and averaging displaying 
different asymptotic behaviour. All cases of stacking were shown to reduce the variance when compared to averaging. 
An estimator for the spectrum in a more general model was also applied, and a simulation suggested that stacking was desirable there as well. 

In this contribution, we arrived at a concrete "sum of terms"-expressions for the variance in our model, 
and the proof for the lower variance in stacked observation models boiled down to more terms vanishing when $L\to\infty$ in this expression, 
when stacking is considered. 
Once formulas for the variance for estimators in more general models are found, this "vanishing of terms" may be proved to be a much more general phenomenon, 
making the concept of stacking even more useful.
Future papers may contribute further along this line by putting the concept of stacking into a more general framework, applicable to more general models. 
In such a framework, (\ref{scalingeq}) (for how the moments of ${\bf D}$ are connected to those of the compound matrix ${\bf D}_{L_1,L_2}$) should
be replaced with more general methods, similarly to how (\ref{secondmodel}) was handled using results from~\cite{ryandebbah:finitedim}.

This paper only considers estimators which perform averaging or stacking of observations. 
Future work could consider non-linear ways of combining observations, and compare results on these with the results obtained here.
Theorem~\ref{stackable} should also have some significance when the noise is not Gaussian, 
since many random matrices with non-Gaussian, i.i.d. entries display the same asymptotic behaviour as Gaussian matrices. 
Future work could also consider this, and explore to which extent results generalize to the finite regime.

\appendices

\section{The proofs of Lemma~\ref{lemmaestimatorform} and Lemma~\ref{lemmavarianceform}} \label{appendixlemmaestimatorform}
To ease the expressions in the following, we will set 
\[ P_{\pi}(n,N)=N^{k(\rho(\pi))-kd(\rho(\pi))} n^{l(\rho(\pi))-ld(\rho(\pi))}. \]
To prove that the estimators $\widehat{D_{p_1,...,p_k}}$ in (\ref{combform}) are unbiased, we will first find alternative recursive expressions for them, 
and prove by induction that these are unbiased. 
Assume that we have found unbiased estimators $\widehat{D_{q_1,...,q_l}}$ building on (\ref{genformulasum}), whenever $q_1+\cdots+q_l<p_1+\cdots+p_k$. 
Define $\widehat{D_{p_1,...,p_k}}$ by reorganizing (\ref{genformulasum}) to
\begin{eqnarray}
  \widehat{D_{p_1,...,p_k}} &= Y_{p_1,...,p_k} - \sum_{p\geq 1}\sum_{{\pi\in \text{SP}_p}\atop{\pi=\pi(\rho_1,\rho_2,q)}} \frac{n^{|\sigma(\pi)|-k}}{N^{|\rho_1|}} \nonumber \\
                        &\qquad\qquad\times P_{\pi}(n,N) \widehat{D_{l_1,\ldots,l_r}}. \label{recursiveversion}
\end{eqnarray}
Here the term for the empty partial permutation has been separated from the other terms, 
and, by convention, $\widehat{D_{l_1,\ldots,l_r}}=1$ whenever $\pi=\pi(\rho_1,\rho_2,q)$ with $\rho_1=\rho_2=\{1,...,p\}$. 
Taking expectations on both sides in (\ref{recursiveversion}) we get 
\begin{eqnarray*}
  \lefteqn{\E(\widehat{D_{p_1,...,p_k}})} \\
                                &=& \E(Y_{p_1,...,p_k}) \\
                                & & - \sum_{p\geq 1}\sum_{{\pi\in \text{SP}_p}\atop{\pi=\pi(\rho_1,\rho_2,q)}} \frac{n^{|\sigma(\pi)|-k}}{N^{|\rho_1|}} P_{\pi}(n,N) \E(\widehat{D_{l_1,\ldots,l_r}}) \\
                                &=& D_{p_1,...,p_k} \\
                                & & + \sum_{p\geq 1}\sum_{{\pi\in \text{SP}_p}\atop{\pi=\pi(\rho_1,\rho_2,q)}} \frac{n^{|\sigma(\pi)|-k}}{N^{|\rho_1|}} P_{\pi}(n,N) D_{l_1,\ldots,l_r} \\                            
                                & & - \sum_{p\geq 1}\sum_{{\pi\in \text{SP}_p}\atop{\pi=\pi(\rho_1,\rho_2,q)}} \frac{n^{|\sigma(\pi)|-k}}{N^{|\rho_1|}} P_{\pi}(n,N) D_{l_1,\ldots,l_r} \\
                                &=& D_{p_1,...,p_k},
\end{eqnarray*}
where we have again used (\ref{genformulasum}).
This shows that $\widehat{D_{p_1,...,p_k}}$ also is unbiased. 
We will now show that this recursive definition of $\widehat{D_{p_1,...,p_k}}$ coincides with (\ref{combform}), which will complete the proof 
of Lemma~\ref{lemmaestimatorform}. 

Recursively replacing the $\widehat{D_{l_1,...,l_r}}$ in (\ref{recursiveversion}) until there are only terms on the form $Y_{p_1,...,p_k}$ left, 
we arrive at an expression on the form
\begin{equation} \label{replacethis2}
\begin{array}{l}
  \sum_l \sum_{\pi_1,...,\pi_l} (-1)^l \left(\prod_{i=1}^l \frac{n^{|\sigma(\pi_i)|-\sigma(\pi_{i-1})}}{N^{|\rho_{1i}|}} P_{\pi_i}(n,N)\right) Y_{l_1,...,l_r} \\
  =\sum_l \sum_{\pi_1,...,\pi_l} (-1)^l \frac{n^{|\sigma(\pi_l)|-k}}{N^{\sum_{i=1}^l|\rho_{1i}|}} \left(\prod_{i=1}^l P_{\pi_i}(n,N)\right) Y_{l_1,...,l_r},
\end{array}
\end{equation}
where $\pi_1,...,\pi_l$ are non-empty partial permutations, 
and where $l_1,...,l_r$ are the cardinalities of the blocks after the identification of edges from all $\pi_1,...,\pi_l$. 
We will call a $\pi=\pi_1,...,\pi_l$ a {\em nested partial permutation}, since it corresponds to a nested application of partial permutations. 
The factor $(-1)^l$ comes from $l$ applications of (\ref{recursiveversion}), where each application contributes a $-1$ from therein. 
Due to this alternating sign, many terms in (\ref{replacethis2}) will cancel.
The following class of permutations will be useful to see these cancellations:
\begin{definition}
Let $\Pi_{l,k}$ be the set of nested partial permutations on the form $\{\pi_1,...,\pi_l\}$, 
where $|\rho_{\pi_1}|+\cdots + |\rho_{\pi_l}|=k$. 
Also, when $\pi=\{\pi_1,...,\pi_l\}$ are nested partial permutations which do not contain any identifications involving edges $i$ or $j$, 
let $\Pi_{\pi,i,j}\subset \Pi_{l,k+2}\cup \Pi_{l+1,k+2}$ be the set of nested partial permutations which equals $\pi$, 
with the exception that the identification $(i,j)$ is added. 
\end{definition}

It is clear that any $\pi\in\Pi_{\pi,i,j}$ gives equal contribution in (\ref{replacethis2}) up to sign, since each such $\pi$ embraces the same edges, 
and the order of the identification of edges does not matter for the final graph. 
It is also clear that
\begin{eqnarray*}
  | \Pi_{\pi,i,j}\cap \Pi_{l,k+2}|   &=& l \\
  | \Pi_{\pi,i,j}\cap \Pi_{l+1,k+2}| &=& l+1,
\end{eqnarray*}
and that the contributions from the two sets $\Pi_{\pi,i,j}\cap \Pi_{l,k+2}$ and $\Pi_{\pi,i,j}\cap \Pi_{l+1,k+2}$ have opposite signs, 
since the sign for any $\pi\in\Pi_{l,k}$ is $(-1)^l$.  
Adding the contributions, we get that the total contribution from $\Pi_{\pi,i,j}$ equals that from just one nested partial permutation in $\Pi_{l+1,k+2}$ 
where we set $\pi_{l+1}=(i,j)$.
Summing over all $\pi$ and $l$ where $\pi=\{\pi_1,...,\pi_l\}$ does not contain any identifications involving $i$ or $j$, we get that 
the contribution from the set of $\pi$ which contain $(i,j)$ equals the sum over $\{\pi_1,...,\pi_{l-1},\pi_l=(i,j)\}$. 
In the same way we can sum over $\pi$ with $(i,j)$ replaced by all other edge possibilities, to arrive at the sum over all 
$\pi=\{\pi_1,...,\pi_l\}$, where all $|\rho_{\pi_i}|=1$, and where we need only sum over sets (i.e. the order of the elements does not matter). 
In other words, and since there are $l=|\rho_1|$ partial permutations nested in this way, we can replace (\ref{recursiveversion}) with 
\[
  \widehat{D_p} = \sum_{{\pi\in \text{SP}_p}\atop{\pi=\pi(\rho_1,\rho_2,q)}} (-1)^{|\rho_1|} \frac{n^{|\sigma(\pi)|-1}}{N^{|\rho_1|}} 
        P_{\pi}(n,N) Y_{l_1,\ldots,l_r}.
\]
This coincides with (\ref{combform}), and the proof of Lemma~\ref{lemmaestimatorform} is finished.

We will have use for the following lemma, which states an expression for the variance of $\widehat{D_p}$. 
We will only state it for the case of no stacking, and leave the variance for the estimators using stackings to Appendix~\ref{mainapp}.
\begin{lemma} \label{lemmavarianceform}
Let $\text{SPR}_{2p}$ be the set of partial permutations of $\{ 1,...,2p\}$ such that all identifications are from $\{ 1,...,p\}$ to $\{ p+1,...,2p\}$, or vice versa. 
The variance
\[ v_p=\E\left[ \widehat{D_p}^2\right]-\E\left[\widehat{D_p}\right]^2 \] 
of $\widehat{D_p}$ equals 
\begin{equation} \label{actualexpression}
  \sum_{ \pi\in\text{SPR}_{2p} } \frac{n^{|\sigma(\pi)|-2}}{N^{|\rho_1|}}  P_{\pi}(n,N) D_{l_1,\ldots,l_r}.
\end{equation}
\end{lemma}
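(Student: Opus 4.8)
The plan is to compute $\E[\widehat{D_p}^2]$ directly from the combinatorial form (\ref{combform}) of the estimator, and to recognize the result as the claimed sum over $\text{SPR}_{2p}$. First I would write $\widehat{D_p}$ as a sum over partial permutations $\pi'\in\text{SP}_p$ of a coefficient times $Y_{l_1,\ldots,l_r}$, so that $\widehat{D_p}^2$ becomes a double sum over pairs $(\pi',\pi'')$, each pair contributing a product of two moment-monomials $Y_{l_1,\ldots,l_r}\cdot Y_{l_1',\ldots,l_s'}$. Since a product of traces of the form $Y_{a_1,\ldots}\cdot Y_{b_1,\ldots}$ is itself a moment-monomial $Y_{a_1,\ldots,b_1,\ldots}$ in the combined indices, each term is governed by a partial permutation acting on $2p$ edges rather than $p$. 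The natural bookkeeping device is to view the two copies of $\widehat{D_p}$ as living on two separate groups of $2p$ edges (drawn as two circles of $2p$ edges each, in the geometric language), so that the combined object is indexed by partial permutations of $\{1,\ldots,2p\}$.

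The key step is to take the expectation. Since $\E[\widehat{D_p}]=D_p$ is unbiased (Lemma~\ref{lemmaestimatorform}), the variance $\E[\widehat{D_p}^2]-D_p^2$ is what survives after the ``diagonal'' contributions—those where the identifications inside the first copy stay within the first copy and likewise for the second—have cancelled against $\E[\widehat{D_p}]^2=D_p^2$. Concretely, I would apply Proposition~\ref{recursivesum} (in its product form $M_{p_1,\ldots,p_k}$) to evaluate the expectation of each $Y$-monomial, introducing a further layer of partial permutations coming from the Gaussian Wick-type contractions of the noise $\bf X$. The cancellation argument is exactly the sign-cancellation mechanism already established in the proof of Lemma~\ref{lemmaestimatorform}: the alternating $(-1)^{|\rho_1|}$ factors kill every configuration except those in which \emph{every} edge from one copy is ultimately identified with an edge from the other copy. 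Imposing that each identification connects $\{1,\ldots,p\}$ (first copy) to $\{p+1,\ldots,2p\}$ (second copy) is precisely the definition of $\text{SPR}_{2p}$, and this is why the surviving sum ranges over that restricted set rather than all of $\text{SP}_{2p}$.

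Once the surviving index set is identified as $\text{SPR}_{2p}$, the coefficient attached to each $\pi\in\text{SPR}_{2p}$ should be read off from the combined power-counting. The factor $n^{|\sigma(\pi)|}$ and $N^{-|\rho_1|}$ and $P_\pi(n,N)$ come directly from (\ref{combform}) and Proposition~\ref{recursivesum}, with the exponent on $n$ being $|\sigma(\pi)|-2$ rather than $|\sigma(\pi)|-1$ because we now have the product of \emph{two} traces (the $k=2$ case of the normalized-trace normalization), which is exactly what the formula (\ref{actualexpression}) records. I would finish by verifying that the $D_{l_1,\ldots,l_r}$ appearing are the block-cardinalities (divided by $2$) of $\sigma(\pi)$ for $\pi$ acting on $2p$ edges, matching the statement. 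The main obstacle I anticipate is the careful sign/cancellation argument showing that only cross-copy identifications survive: one must argue that any $\pi$ with an identification internal to a single copy can be paired with a sign-opposite partner (just as in the Lemma~\ref{lemmaestimatorform} cancellation), so that these contributions are exactly the $D_p^2$ term being subtracted off. Getting this pairing precise—tracking which configurations correspond to $\E[\widehat{D_p}]^2$ versus the genuine cross terms—is the delicate part; the power-counting bookkeeping afterward is routine.
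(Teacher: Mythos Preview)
Your plan is correct and follows essentially the same route as the paper: expand $\widehat{D_p}^2$ as a double sum over $(\pi_1,\pi_2)\in\text{SP}_p\times\text{SP}_p$, apply Proposition~\ref{recursivesum} to the resulting $Y$-monomials, observe that the inner partial permutations with no cross-copy identifications are exactly what $\E[\widehat{D_p}]^2$ subtracts off, and then use the $(-1)^{|\rho_1|}$ sign-pairing to kill any remaining within-copy identification (so that only $\pi_1=\pi_2=\emptyset$ with a purely cross $\pi$ survives). One small wording slip: the surviving configurations are not those where \emph{every} edge is matched across copies, but those where every identification \emph{that is present} is a cross-identification---$\text{SPR}_{2p}$ consists of partial permutations, so some edges may remain unidentified, which is what produces the $D_{l_1,\ldots,l_r}$ factor.
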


\begin{proof}
Inserting (\ref{combform}) twice we get 
\begin{eqnarray}
  v_p &=& \E\left[\widehat{D_p}^2\right]-\E\left[\widehat{D_p}\right]^2 \nonumber \\
      &=& \sum_{{\pi_1\in\text{SP}_p}\atop{\pi_1=\pi(\rho^{(1)}_1,\rho^{(1)}_2,q)}} 
          \sum_{{\pi_2\in\text{SP}_p}\atop{\pi_2=\pi(\rho^{(2)}_1,\rho^{(2)}_2,q)}} \nonumber \\
      & & \qquad\qquad(-1)^{|\rho^{(1)}_1|} (-1)^{|\rho^{(2)}_1|} 
          \frac{n^{|\sigma(\pi_1)|-1}}{N^{|\rho^{(1)}_1|}} \frac{n^{|\sigma(\pi_2)|-1}}{N^{|\rho^{(2)}_1|}} \nonumber \\
      & & \qquad\qquad\times P_{\pi_1}(n,N) P_{\pi_2}(n,N) \nonumber \\
      & & \qquad\qquad\times (\E\left[ Y_{l_1^{(1)},\ldots,l_{r_1}^{(1)}} Y_{l_1^{(2)},\ldots,l_{r_2}^{(2)}} \right] \nonumber \\
      & & \qquad\qquad\qquad - \E\left[ Y_{l_1^{(1)},\ldots,l_{r_1}^{(1)}}\right] \E\left[ Y_{l_1^{(2)},\ldots,l_{r_2}^{(2)}} \right]), \label{conthere}
\end{eqnarray}
where  $l_1^{(1)},\ldots,l_{r_1}^{(1)}$ are the cardinalities of the blocks of $\sigma(\pi_1)$ divided by $2$,
$l_1^{(2)},\ldots,l_{r_2}^{(2)}$ those of $\sigma(\pi_2)$. Using (\ref{genformulasum}) we can write
\begin{eqnarray}
  \lefteqn{\E\left[ Y_{l_1^{(1)},\ldots,l_{r_1}^{(1)}} Y_{l_1^{(2)},\ldots,l_{r_2}^{(2)}} \right]} \nonumber\\
  &=& \sum_{{\pi\in \text{SP}_{2p-|\rho^{(1)}_1|-|\rho^{(2)}_1|}}\atop{\pi=\pi(\rho_1,\rho_2,q)}}
      \frac{n^{|\sigma(\pi)|-r_1-r_2}}{N^{|\rho_1|}} P_{\pi}(n,N) D_{l_1,\ldots,l_r} \label{conthere1},
\end{eqnarray}
where $l_1,\ldots,l_r$ are the cardinalities of $\sigma(\pi)$ divided by $2$, and
\begin{eqnarray}
  \lefteqn{\E\left[ Y_{l_1^{(1)},\ldots,l_{r_1}^{(1)}}\right] \E\left[ Y_{l_1^{(2)},\ldots,l_{r_2}^{(2)}} \right]} \nonumber\\
  &=& \sum_{{\pi^{(1)}\in \text{SP}_{p-|\rho^{(1)}_1|}}\atop{\pi^{(1)}=\pi(\rho_{11},\rho_{12},q)}}
      \sum_{{\pi^{(2)}\in \text{SP}_{p-|\rho^{(2)}_1|}}\atop{\pi^{(2)}=\pi(\rho_{21},\rho_{22},q)}} \nonumber\\
  & & \qquad\qquad \frac{n^{|\sigma(\pi^{(1)})|-r_1}}{N^{|\rho_{11}|}} \frac{n^{|\sigma(\pi^{(2)})|-r_2}}{N^{|\rho_{21}|}} \nonumber\\
  & & \qquad\qquad\times P_{\pi^{(1)}}(n,N) P_{\pi^{(2)}}(n,N) \nonumber \\
  & & \qquad\qquad\times D_{l_1^{(1)},\ldots,l_{r_1}^{(1)}} D_{l_1^{(2)},\ldots,l_{r_2}^{(2)}}, \label{conthere2}
\end{eqnarray}
where $l_1^{(1)},\ldots,l_{r_1}^{(1)}$ are the cardinalities of $\sigma(\pi^{(1)})$ divided by $2$, 
$l_1^{(2)},\ldots,l_{r_2}^{(2)}$ those of $\sigma(\pi^{(2)})$.
The powers of $n$ and $N$ in (\ref{conthere2}) can be written
\[ \frac{n^{|\sigma(\pi^{(1)})|+|\sigma(\pi^{(2)})|-r_1-r_2}}{N^{|\rho_{11}|+|\rho_{21}|}} P_{\pi^{(1)}}(n,N) P_{\pi^{(2)}}(n,N),\]
which are seen to match the powers of $n$ and $N$ in (\ref{conthere1}) when $\pi=\pi_1\times\pi_2$ 
does not contain any identification of edges from different expectations.
These terms thus cancel, and we are left with summing over $\pi$ 
containing identification of edges between the two expectations.

To see that we need only sum over $\pi$ containing only identification of edges from one expectation to another,  
note that a $\pi_1$ containing $(i,j)$ cancels the contribution from a $\pi$ containing $(i,j)$, 
since the former has an additional power of $-1$. The same can be said for $\pi_2$. 
The only terms not canceling therefore occur when $\pi_1$ and $\pi_2$ are empty, 
and $\pi$ only contains identifications between the two expectations. 
These correspond to $\text{SPR}_{2p}$ by definition.
All of them contribute with a positive sign, and all in all we get that $v_p$ equals
\[
  \sum_{ \pi\in\text{SPR}_{2p} } 
  \frac{n^{|\sigma(\pi)|-2}}{N^{|\rho_1|}}  P_{\pi}(n,N) D_{l_1,\ldots,l_r}
\]
(since $r_1=r_2=1$), which is what we had to show.
\end{proof}

\section{The proof of Theorem~\ref{stackable}} \label{mainapp}
The geometric interpretation of $\pi\in\text{SPR}_{2p}$ is as an identification among some of $4p$ edges, 
where even edges are only identified with odd edges and vice versa, 
and where there are only identifications between $\{ 1,...,2p\}$ and $\{ 2p+1,...,4p\}$, and vice versa.
It is clear that $\pi\in\text{SPR}_{2p}$ is invariant under cyclic shifts of the form $\pi\rightarrow s_{1k}s_{2l} \pi (s_{1k}s_{2l})^{-1}$, where
\begin{eqnarray*}
  s_{1k}(r) &=& \left\{ \begin{array}{l} r+k \mbox{ for } r\in\{ 1,...,2p\} \\ r \mbox{ for } r\in\{ 2p+1,...,4p\} \end{array} \right. \\
  s_{2l}(r) &=& \left\{ \begin{array}{l} r \mbox{ for } r\in\{ 1,...,2p\} \\ r+l \mbox{ for } r\in\{ 2p+1,...,4p\} \end{array} \right.
\end{eqnarray*}
(addition performed so that result stays within the same interval, either $[1,...,2p]$ or $[2p+1,...,4p]$) 
as long as $k$ and $l$ either are both odd, or both even, in order for the identification to remain between even and odd elements and vice versa. 
The equivalence class of $\pi\in\text{SPR}_{2p}$ under cyclic shifts is given by 
$\cup_{k,l} s_{1k}s_{2l} \pi (s_{1k}s_{2l})^{-1}$, where $k$ and $l$ either are both odd, or both even. 
We will denote by $\text{SPE}_{2p}$ the set of such equivalence classes, and denote by $\bar{\pi}\in\text{SPE}_{2p}$ 
the equivalence class of $\pi\in\text{SPR}_{2p}$.

From the geometric interpretation of $\pi$ it is clear that, when we instead of $\pi$ use $s_{1k}s_{2l} \pi (s_{1k}s_{2l})^{-1}$,
\begin{enumerate}
  \item $|\rho_1|$ and $|\rho_2|$ is the same for $\pi$ and $s_{1k}s_{2l} \pi (s_{1k}s_{2l})^{-1}$,
  \item $|\sigma(\pi)|=|\sigma(s_{1k}s_{2l} \pi (s_{1k}s_{2l})^{-1})|$. 
    The block cardinalities $l_1,...,l_r$ of  $\sigma(\pi)$ and $\sigma(s_{1k}s_{2l} \pi (s_{1k}s_{2l})^{-1})$ are also equal,
  \item when $k$ and $l$ are both even, $k,kd,l,ld$ are the same for $\rho(\pi)$ and $\rho(s_{1k}s_{2l} \pi (s_{1k}s_{2l})^{-1})$,
  \item when $k$ and $l$ are both odd, 
    \begin{eqnarray*}
      k(\rho(\pi)  &=& l(\rho(s_{1k}s_{2l} \pi (s_{1k}s_{2l})^{-1})) \\
      kd(\rho(\pi) &=& ld(\rho(s_{1k}s_{2l} \pi (s_{1k}s_{2l})^{-1})) \\
      l(\rho(\pi)  &=& k(\rho(s_{1k}s_{2l} \pi (s_{1k}s_{2l})^{-1})) \\
      ld(\rho(\pi) &=& kd(\rho(s_{1k}s_{2l} \pi (s_{1k}s_{2l})^{-1})).
    \end{eqnarray*}
\end{enumerate}
By definition of $P_{\pi}$, the last two statements say that 
\[ P_{s_{1k}s_{2l} \pi (s_{1k}s_{2l})^{-1}}(n,N) = P_{\pi}(n,N) \] 
when $k,l$ are both even, and
\[ P_{s_{1k}s_{2l} \pi (s_{1k}s_{2l})^{-1}}(n,N) = P_{\pi}(N,n) \]
when $k,l$ are both odd. 
Since there are equally many elements with $k,l$ odd and $k,l$ even under cyclic equivalence, we see that
\begin{equation}
  Q_{\bar{\pi}}(n,N)=\sum_{\pi_1\sim\pi} P_{\pi_1}(n,N)
\end{equation}
is a polynomial symmetric in $n$ and $N$, where $\sim$ denotes equivalence under cyclic shifts.
The first statements above say that the rest of the powers of $n$ and $N$ in (\ref{actualexpression}) are unchanged under cyclic equivalence. 
By summing over the cyclic equivalence classes in (\ref{actualexpression}), we see that it can be rewritten to
\begin{equation} \label{newexpr}
  v_p = \sum_{ \bar{\pi}\in\text{SPE}_{2p} } \frac{n^{|\sigma(\pi)|-2}}{N^{|\rho_1|}}  Q_{\bar{\pi}}(n,N) D_{l_1,\ldots,l_r},
\end{equation}
with $Q_{\bar{\pi}}$  symmetric in $n$ and $N$. 
Moreover, $Q_{\bar{\pi}}$ has the form $Q_{\bar{\pi}}(n,N)=an^kN^l+bn^lN^k$, 
where $a+b$ is the number of elements in the cyclic equivalence class of $\pi$. 

Since $\widehat{D_{p,L_1,L_2}}$ is $L_1^{1-p}$ times the estimator for the $p$-th moment $F_p$ of the compound matrix 
by the comments following the statement of Lemma~\ref{lemmaestimatorform},
the variance $v_{p,\cdot,L}$ of $\widehat{D_{p,L_1,L_2}}$ in (\ref{combform2}) is, 
after replacing $n$ with $nL_1$, and $N$ with $NL_2$ in (\ref{newexpr}),
\begin{eqnarray}
  v_{p,\cdot,L} &=& L_1^{2-2p} \sum_{ \pi\in\text{SPE}_{2p} } 
                    \frac{n^{|\sigma(\pi)|-2}L_1^{|\sigma(\pi)|-2}}{N^{|\rho_1|}L_2^{|\rho_1|}} 
                    Q_{\bar{\pi}}(nL_1,NL_2) \nonumber \\
                & & \times F_{l_1,\ldots,l_r} \nonumber \\
                &=& L_1^{2-2p} \sum_{ \pi\in\text{SPE}_{2p} } 
                    \frac{n^{|\sigma(\pi)|-2}L_1^{|\sigma(\pi)|-2}}{N^{|\rho_1|}L_2^{|\rho_1|}} 
                    Q_{\bar{\pi}}(nL_1,NL_2) \nonumber \\
                & & \times L_1^{2p-|\rho_1|-|\sigma(\pi)|} D_{l_1,\ldots,l_r} \nonumber \\
                &=& \sum_{ \pi\in\text{SPE}_{2p} } 
                    \frac{n^{|\sigma(\pi)|-2}}{N^{|\rho_1|}L^{|\rho_1|}}  
                    Q_{\bar{\pi}}(nL_1,NL_2) 
                    D_{l_1,\ldots,l_r}, \label{lookatthis}
\end{eqnarray} 
where we have used (\ref{scalingeq}), and set $L=L_1L_2$.

$\deg(Q_{\bar{\pi}})$ describes the number of vertices in the graph of random edges not bordering to deterministic edges. 
Each vertex is associated with a value $\leq L\max(n,N)$, so that $Q_{\bar{\pi}}$ has order at most $L$ to the power of 
the number of vertices not bordering to deterministic edges.
We will use this in the following, and consider the following possibilities:
\begin{enumerate}
  \item There are no deterministic edges: in this case, $p=|\rho_1|/2$.
    Since there are only crossidentifications between $\{1,...,2p\}$ and $\{2p+1,...,4p\}$ for $\pi\in\text{SPR}_{2p}$, 
    any vertex in $\{2p+1,...,4p\}$ is identified with a vertex in $\{1,...,2p\}$, 
    so that $\{1,...,2p\}$ contains representatives for all equivalence classes of vertices. 
    There are thus at most $p$ even equivalence classes, and at most $p$ odd equivalence classes. Thus
    \begin{eqnarray*}
      Q_{\bar{\pi}}(nL_1,NL_2) &\leq& O\left((nL_1)^p(NL_2)^p\right)=O(L^p) \\
                               &=&    O(L^{|\rho_1|/2}).
    \end{eqnarray*}
    When $p=1$, $|\rho_1|=2$, and $|\rho_1|/2=|\rho_1|-1$, so that $Q_{\bar{\pi}}(nL_1,NL_2)\leq O\left(L^{|\rho_1|-1}\right)$, 
    and it is easy to check that we have equality for the only partial permutation in $\text{SPR}_2$, and that $Q_{\bar{\pi}}(nL_1,NL_2)=nNL^{|\rho_1|-1}$ 
    for this $\bar{\pi}$.
    When $p>1$, $|\rho_1|/2<|\rho_1|-1$, so that $Q_{\bar{\pi}}(nL_1,NL_2)=O(L^{|\rho_1|-2})$ for such $\bar{\pi}$. 
  \item The graph of random edges is a tree, and there exist deterministic edges: 
    since any crossidentification between $\{1,...,2p\}$ and $\{2p+1,...,4p\}$ does not give rise to a leaf node when all edges are considered,
    any leafnode in the tree of random edges must be bordering to a deterministic edge.
    Since the tree contains $|\rho_1|+1$ vertices, and since there are at least two leafnodes in any tree, we have that
    $Q_{\bar{\pi}}(nL_1,NL_2)$ has order at most $O\left(L^{|\rho_1|-1}\right)$, 
    with equality only if the graph of random edges borders to exactly two deterministic edges. 
    It is easily seen that this occurs if and only if $|\rho_1|$ pairs of edges are identified in successive order. 
  \item The graph of random edges is not a tree, and there exist deterministic edges: 
    if there are two cycles in the graph of random edges, $Q_{\bar{\pi}}(nL_1,NL_2)$ has order at most $O\left(L^{|\rho_1|-2}\right)$ 
    (two subtracted for the cycles, one for the deterministic edge).
    Similarly, if there is one cycle, and more than one vertex bordering to a deterministic edge, $Q_{\bar{\pi}}(nL_1,NL_2)$ 
    has order at most $O\left(L^{|\rho_1|-2}\right)$. 
    Assume thus that there is only one vertex bordering to a deterministic edge, and only one cycle.
    It is easily checked that this vertex must be on the cycle, and that we must end up in the same situation as in 2) where 
    edges are identified in successive order, for which we actually have a tree. Thus, there is nothing more to consider.
\end{enumerate}
We see that $Q_{\bar{\pi}}(nL_1,NL_2)$ has order at most $O\left(L^{|\rho_1|-1}\right)$ in any case.
Inserting into (\ref{lookatthis}), the first case above contributes with $L^{-1}\frac{1}{nN}$ for $p=1$, 
for $p>1$ we get only terms of order $O(L^{-2})$. 
The third case contributes only with terms of order $O(L^{-2})$. 
For the second case, contributions are of order $O(L^{-2})$ when $|\rho_1|$ pairs of edges are not identified in successive order. 
When they are identified in successive order, we consider the following different possibilities:
\begin{itemize}
  \item When $|\rho_1|$ is odd we will have $k(\rho(\pi))-kd(\rho(\pi))=l(\rho(\pi))-ld(\rho(\pi))=\frac{|\rho_1|-1}{2}$, so that 
    \begin{eqnarray*}
      \lefteqn{Q_{\bar{\pi}}(nL_1,NL_2)}\\
        &=& (L_2N)^{k(\rho(\pi))-kd(\rho(\pi))}(L_1n)^{l(\rho(\pi))-ld(\rho(\pi))}\\ 
        &=& (L_2N)^{\frac{|\rho_1|-1}{2}} (L_1n)^{\frac{|\rho_1|-1}{2}}\\
        &=& (nN)^{\frac{|\rho_1|-1}{2}}L^{\frac{|\rho_1|-1}{2}},
    \end{eqnarray*}
    so that the term for $\bar{\pi}$ in (\ref{lookatthis}) is of order
    $L^{(|\rho_1|-1)/2-|\rho_1|}=L^{-|\rho_1|/2-1/2}$.
    When $|\rho_1|=1$, this is $O\left(L^{-1}\right)$, and the contribution in this case is $\frac{1}{nNL}$
    times the number of partitions in the equivalence class of $\bar{\pi}$
    When $|\rho_1|>1$, all terms are of order $O(L^{-2})$. 
  \item When $|\rho_1|$ is even, either 
    \begin{enumerate}
      \item $k(\rho(\pi))-kd(\rho(\pi))=\frac{|\rho_1|}{2}-1$, $l(\rho(\pi))-ld(\rho(\pi))=\frac{|\rho_1|}{2}$, 
        for which
        \begin{eqnarray*}
          \lefteqn{Q_{\bar{\pi}}(nL_1,NL_2)}\\
            &=& (L_2N)^{\frac{|\rho_1|}{2}-1} (L_1n)^{\frac{|\rho_1|}{2}}\\
            &=& N^{\frac{|\rho_1|}{2}-1} n^{\frac{|\rho_1|}{2}} L^{\frac{|\rho_1|}{2}-1} L_1,
        \end{eqnarray*}
        so that the term for $\bar{\pi}$ in (\ref{lookatthis}) is of order 
        \[ L^{|\rho_1|/2-1-|\rho_1|}L_1=L^{-|\rho_1|/2-1}L_1. \] 
        When the stacking is not vertical, we have that $L_1\leq O(L^{1/2})$, 
        so that the term for $\bar{\pi}$ is of order $\leq O(L^{-|\rho_1|/2-1}L^{1/2})= O(L^{-|\rho_1|/2-1/2})\leq O(L^{-3/2})$ 
        When the stacking is vertical, the term is of order $L^{-|\rho_1|/2}$, which is $O(L^{-2})$ when $|\rho_1|>2$. 
        When $|\rho_1|=2$, the contribution in (\ref{lookatthis}) is seen to be $\frac{1}{N^2L}$ 
        times the number of partitions in the equivalence class of $\bar{\pi}$. 
      \item $k(\rho(\pi))-kd(\rho(\pi))=\frac{|\rho_1|}{2}$, $l(\rho(\pi))-ld(\rho(\pi))=\frac{|\rho_1|}{2}-1$, 
        for which the term for $\bar{\pi}$ in (\ref{lookatthis}) similarly is shown to be of order 
        \[ L^{|\rho_1|/2-1-|\rho_1|}L_2=L^{-|\rho_1|/2-1}L_2, \]
        and, similarly, only horizontal stacking with $|\rho_1|=2$ gives contributions of order $O(L^{-1})$. 
        The contribution in (\ref{lookatthis}) is seen to be $\frac{1}{nNL}$ 
        times the number of partitions in the equivalence class of $\bar{\pi}$.
    \end{enumerate}
\end{itemize}
When it comes to the number of elements in the corresponding equivalence
classes, it is easy to see that
\begin{itemize}
  \item there are $2p^2$ elements for the class where $|\rho_1|=1$,
    corresponding to any choice of the $2p$ edges $\{1,...,2p\}$, and any choice of the $p$ even or odd edges in $\{2p+1,...,4p\}$. 
  \item $p^2$ elements for each class where $|\rho_1|=2$.
\end{itemize} 
Summing up, we see that for $p=1$, $v_{1,\cdot,L} = L^{-1}\frac{2}{nN}D_1+L^{-1}\frac{1}{nN}$ for any type of stacking/averaging. 
For $p\geq 2$ we get that
\begin{eqnarray*} 
  v_{p,R,L} &=& L^{-1} \frac{2p^2}{nN} D_{2p-1} + O\left(L^{-3/2}\right) \\
  v_{p,V,L} &=& L^{-1} \frac{2p^2}{nN} D_{2p-1} + L^{-1} \frac{p^2}{N^2} D_{2p-2} + O\left(L^{-3/2}\right) \\ 
  v_{p,H,L} &=& L^{-1} \frac{2p^2}{nN} D_{2p-1} + L^{-1} \frac{p^2}{nN}  D_{2p-2} + O\left(L^{-3/2}\right),
\end{eqnarray*}
and the first formulas in Theorem~\ref{stackable} follows after multiplying both sides with $L$, and taking limits. 
The case of averaging follows by noting that there are only positive coefficients in the formula (\ref{lookatthis}) for the variance, 
and that the variance is divided by $L$ when one takes $L$ independent observations.  

Finally, we prove why the least variance is obtained when the compound observation matrix is as square as possible. 
With $c_1=nL_1,c_2=NL_2,c=\frac{c_1}{c_2}$, we can write each $Q_{\bar{\pi}}(nL_1,NL_2)$ as a scalar multiple of
\begin{eqnarray*}
  \lefteqn{c_1^kc_2^l+c_1^lc_2^k } \\
  &=& (nN)^{\frac{k+l}{2}}L^{\frac{k+l}{2}}\left( c_1^{\frac{k-l}{2}}c_2^{\frac{l-k}{2}} + c_1^{\frac{l-k}{2}}c_2^{\frac{k-l}{2}}\right) \\
  &=& (nN)^{\frac{k+l}{2}}L^{\frac{k+l}{2}}\left( c^{\frac{k-l}{2}} + c^{\frac{l-k}{2}}\right),
\end{eqnarray*}
It is clear that $f(c)=c^{(k-l)/2} + c^{(l-k)/2}$ has a global minimum at $c=1$ on $(0,\infty)$, and the result follows. 

\bibliography{../bib/mybib,../bib/mainbib}

\end{document}